\newcolumntype{L}[1]{>{\raggedright\let\newline\\\arraybackslash\hspace{0pt}}m{#1}}
\newcolumntype{C}[1]{>{\centering\let\newline\\\arraybackslash\hspace{0pt}}m{#1}}
\newcolumntype{R}[1]{>{\raggedleft\let\newline\\\arraybackslash\hspace{0pt}}m{#1}}
\newtheorem{observation}{Observation}
\newtheorem{example}{Example}
\newtheorem{theorem}{Theorem}
\newtheorem{lemma}{Lemma}
\newtheorem{remark}{Remark}
\newtheorem{property}{Property}
\newcommand\T{\rule{0pt}{2.1ex}}
\newcommand\B{\rule[-.7ex]{5pt}{0pt}}
\begin{document}

\title{Information Diffusion in Social Networks in Two Phases
}
\markboth{IEEE Transactions on Network Science and Engineering, volume 3, number 4, pages 197-210, 2016}{Swapnil Dhamal, Prabuchandran K.J., and Y. Narahari: Information Diffusion in Social Networks in Two Phases}
\author{Swapnil Dhamal, Prabuchandran K.J., and Y. Narahari 
\thanks{
Cite this article as: 
S. Dhamal, K. J. Prabuchandran, and Y. Narahari, ``Information
diffusion in social networks in two phases,'' IEEE Transactions on
Network Science and Engineering, vol. 3, no. 4, pp. 197-210, 2016.
The original publication is available at \url{http://ieeexplore.ieee.org/abstract/document/7570252/}
%
}
}

\maketitle

\begin{abstract}
The problem of maximizing information diffusion, given a certain budget expressed in terms of the number of seed nodes, is an important topic in social networks research. Existing literature focuses on single phase diffusion where all seed nodes are selected at the beginning of diffusion and all the selected nodes are activated simultaneously. This paper undertakes a detailed investigation of the effect of selecting and activating seed nodes in multiple phases. Specifically, we study diffusion in two phases assuming the well-studied independent cascade model. First, we formulate an objective function for two-phase diffusion, investigate its properties, and propose efficient algorithms for finding seed nodes in the two phases. Next, we study two associated problems: (1) {\em budget splitting} which seeks to optimally split the total budget between the two phases and (2) {\em scheduling} which seeks to determine an optimal delay after which to commence the second phase. Our main conclusions include: (a) under strict temporal constraints, use single phase diffusion, (b) under moderate temporal constraints, use two-phase diffusion with a short delay while allocating most of the budget to the first phase, and (c) when there are no temporal constraints, use two-phase diffusion with a long delay while allocating roughly one-third of the budget to the first phase.
\end{abstract}

\begin{keywords}
Social networks, 
viral marketing, 
information diffusion, 
influence maximization, 
independent cascade model.
\end{keywords}

\section{Introduction}
\label{sec:intro_mpid}
Social networks play a fundamental role in the spread of information on a large scale. 
An information can be of various types: opinions, behaviors, innovations, diseases, rumors, etc. 
Depending on whether we aim to maximize or restrict the spread of information, the objective function can be defined accordingly.
One of the central questions in information diffusion is: given a certain budget $k$ expressed in terms of the number of seed nodes, which $k$ nodes in the social network should be selected to trigger the diffusion so as to maximize a suitably defined objective function?

In this paper, we focus on the problem of {\em influence maximization\/}, where the objective function is the extent of information or influence spread.
For example, if a company wishes to do viral marketing of its product,
its objective would be to spread the information through the network so that it reaches large number of potential customers. So the company would try to select the seed nodes (nodes to whom free samples, discounts, or other such incentives are to be provided) such that the number of influenced nodes and hence the product sales, would be maximized.

\subsection{Model for Information Diffusion}

We represent social network as a weighted and directed graph $G = (N, E, \mathcal{P})$, where $N$ is the set of $n$ nodes, $E$ is the set of $m$ directed edges, and $\mathcal{P}$ is the set of weights associated with the edges. 
For studying diffusion in such a network, several models  have been proposed in the literature \cite{networkscrowdsmarkets}. 
The Independent Cascade (IC) model and the Linear Threshold (LT) model are two of the most well-studied models.
In this paper, our focus will be on the IC model; we later provide a note on the LT model.

{\scriptsize{$\bullet$}} 
{\em The Independent Cascade (IC) model\/}:
In this model, for each directed edge $(u,v) \in E$, there is an associated weight or {\em influence probability\/} $p_{uv}$ that specifies the probability with which source node $u$  influences target node $v$. The diffusion starts at time step $0$ with simultaneous triggering of a set of initially activated or influenced seed nodes, following which, it proceeds in discrete time steps. In each time step, nodes which got influenced in the previous time step (call them {\em recently activated nodes}) attempt to influence their neighbors, and succeed with the influence probabilities associated with the edges. 
These neighbors, if successfully influenced, now become recently activated nodes for the next time step. In any given time step, only recently  activated nodes contribute to diffusing information. After this time step, such nodes are no longer recently activated (call them {\em already activated nodes}). Nodes, once activated, remain activated for the rest of the diffusion. 
In short, when node $u$ gets activated at a certain time step, it gets exactly one chance to activate each of its inactive neighbors (that too in the immediately following time step), with 
probability $p_{uv}$ for each neighbor $v$.
The diffusion terminates when no further nodes can be activated.

{\scriptsize{$\bullet$}} 
{\em Notion of Live Graph\/}:
The notion of {\em live graph} is crucial to the analysis of the IC model.
A live graph $X$ is an instance of graph $G$, obtained by sampling the edges;
an edge $(u,v)$ is present in the live graph with probability $p_{uv}$ and absent with probability $1-p_{uv}$, independent of the presence of other edges in the live graph (so a live graph is directed and unweighted).
The probability $p(X)$ of occurrence of any live graph $X$, can be obtained as
$\prod_{(u,v) \in X} (p_{uv}) \prod_{(u,v) \notin X} (1-p_{uv})$.
It can be seen that as long as a node $u$, when influenced, in turn influences node $v$ with probability $p_{uv}$ that is independent of time, sampling the edge $(u,v)$ in the beginning of the diffusion is equivalent to sampling it when $u$ is activated \cite{kempe2003maximizing}. 

{\scriptsize{$\bullet$}} 
{\em Special Cases of the IC model\/}:
In this paper, when there is a need for transforming an undirected unweighted network (dataset) into a directed weighted network for studying the diffusion, we consider two popular, well-accepted special cases of the IC model, namely, the {\em weighted cascade (WC) model} and the {\em trivalency (TV) model}. The WC model does the transformation by making all edges bidirectional and assigning a weight to every directed edge $(u,v)$ equal to the reciprocal of $v$'s degree in the undirected network~\cite{kempe2003maximizing}.
The TV model makes all edges bidirectional and assigns a weight to every directed edge by uniformly sampling from the set of values $\{0.001, 0.01, 0.1\}$.

\subsection{Relevant Properties of Set Functions}

In the considered problem, we need to select a set of seed nodes based on its value (the extent of influence spread) which can be given by a set function. 
A {\em set function} $f(\cdot)$ is a function that takes a subset of $N$ as input and outputs a real number, that is, $f : 2^{N} \rightarrow \mathbb{R}$ where $2^{N}$ is the power set of $N$. 
 %
$f(\cdot)$ is said to be:

{\scriptsize{$\bullet$}} 
 {\em Non-negative} if 
$
f(S) \geq 0, \, \forall S \subseteq N
$.
%
%

{\scriptsize{$\bullet$}} 
{\em Monotone increasing} if 
$
f(S) \leq f(T), \, \forall S \subset T \subseteq N
$.
%
%

{\scriptsize{$\bullet$}} 
{\em Submodular} if 
$
f(S \cup \{i\}) - f(S) \geq f(T \cup \{i\}) - f(T), \, \forall i \in N \setminus T, \, \forall S \subset T \subset N
$,
%
that is, the marginal value added by a node to a superset of a set is not more than the marginal value added by that node to that set (diminishing returns property). 
It is said to be {\em supermodular} if the inequality is reversed.

{\scriptsize{$\bullet$}} 
{\em Subadditive} if 
$
f(S \cup T) \leq f(S) + f(T), \, \forall S , T \subseteq N
$,
that is, the value of a union of any two sets is at most the sum of their individual values.
It is said to be {\em superadditive} if the inequality is reversed.
It can be shown that a non-negative submodular function is subadditive, while a non-negative superadditive function is supermodular.

These properties have implications on which algorithms are likely to find a set with a good function value.
%
%
%
For instance, 
the greedy hill-climbing algorithm (selecting nodes one at a time, each time choosing a node that provides the largest marginal increase in the function value, until the budget is exhausted) provides an approximation guarantee of $(1- \frac{1}{e})$ for maximizing a non-negative, monotone increasing, submodular function~\cite{nemhauser1978analysis}.
Also there exists an algorithm that provides an approximation guarantee of $\frac{1}{2}$ for maximizing a subadditive function
\cite{feige2009maximizing}. 

\subsection{Relevant work}
\label{sec:relevant_mpid}

The problem of influence maximization in social networks has been extensively studied in the literature \cite{networkscrowdsmarkets,guille2013information}.
Chen, Wang, and Yang \cite{chen2010scalable} show that obtaining the exact value of the objective function for a seed set (the expected number of influenced nodes at the end of the diffusion that was triggered at the nodes of that set), under the IC model, is \#P-hard. 
They show that the value can be obtained with high accuracy using a sufficiently large number of Monte-Carlo simulations.
Kempe, Kleinberg, and Tardos \cite{kempe2003maximizing} show that maximizing the objective function under the IC model is NP-hard,
and present a $(1-\frac{1}{e}-\epsilon)$-approximate algorithm,
where $\epsilon$ is small for sufficiently large number of Monte-Carlo simulations.
Chen, Wang, and Yang \cite{chen2009efficient} propose fast heuristics for influence maximization in the IC model.
%
There have been
attempts to relax the assumption that influence probabilities are known
\cite{goyal2010learning}.

Narayanam and Narahari \cite{narayanam2010shapley} provide an algorithm 
that gives satisfactory performance irrespective of whether or not the objective function is submodular.
%
Franks et al. \cite{franks2013manipulating} use influencer agents effectively to manipulate the emergence of conventions and increase convention adoption and quality.
Shakarian et al. \cite{shakarian2013mancalog} introduce a logical framework designed to describe cascades in complex networks.
%

Another well-studied problem is the problem of influence limitation in social networks \cite{budak2011limiting, premm2012influence}, where the objective is to minimize the spread of a negative campaign by triggering a positive campaign.

Time related constraints in the context of diffusion have also been studied in the literature. 
Chen, Lu, and Zhang \cite{chen2012time} consider the problem where the goal is to maximize influence spread within a given deadline.
Nguyen et al. \cite{nguyen2012containment} aim to find the smallest set of influential nodes whose decontamination with good information would help contain the viral spread of misinformation, that was initiated from a given set, to a desired ratio in a given number of time steps.

The above papers address only single phase diffusion.
The idea of using multiple phases for maximizing an objective function has been presented in \cite{golovin2011adaptive}; the study is a preliminary one.
To the best of our knowledge, ours is the first detailed effort to study multi-phase information diffusion in social networks.
A previous, very preliminary, concise version of this paper appears in \cite{dhamal2015multiphase}.
In the next section, we bring out the motivation for this work, present a motivating example, and
describe the agenda of this work.

\section{Motivation and Agenda}
\label{sec:motiv_mpid}

Most of the existing literature on information diffusion works with the assumption that the diffusion is triggered at all the $k$ seed nodes in one go, that is, the budget is exhausted in one single instalment. We consider triggering the diffusion in multiple phases by appropriately splitting 
the budget $k$ across the phases. 

In the IC model, 
the diffusion is a random process. Since the general problem addressed in the literature aims to maximize influence spread in expectation, it is possible that the spread in certain instances is much less than the expected one. This is a vital practical issue because a company or organization investing in selecting seed nodes cannot afford awkward instances where the spread is disappointingly low. Multi-phase diffusion seems an attractive and  natural approach wherein, the company can modulate its decisions at intermediate times during the diffusion process, in order to avoid such instances. This happens because the company would be more certain about the diffusion and hence would hopefully select better seed nodes in the second and subsequent phases.  However, there is a delay in activating the second and subsequent seed sets and the overall diffusion process may be delayed, leading to compromise of time. 
This may be undesirable when the value of the product or information decreases with time, or when there is a competing diffusion and people get influenced by the product or information which reaches them first.

There is thus a natural trade-off between (a) using better knowledge of influence spread to increase the number of influenced nodes at the end of the diffusion and (b) the accompanying delay in the activation of seed sets from the second phase onwards.

For multi-phase diffusion to be implemented effectively, it is necessary that the company is able to observe the status of nodes in the social network (inactive, recently activated, or already activated), that is, the company needs to link its customers to the corresponding nodes in the social network. To make such an observation, it would be useful to get the online social networking identity (say Facebook ID) of a customer as soon as the customer buys the product. This could be done using a product registration website (say for activating warranty) where a customer, on buying the product, needs to login using a popular social networking website (say Facebook), or needs to provide an email address that can be linked to a Facebook ID. Thus the time step when the node buys the product, can be obtained, and hence the node can be classified as already activated or recently activated.

In this paper, to obtain a firm grounding on multi-phase diffusion, we focus our attention on two-phase diffusion. We believe that much of the intuition from this work carries over to multi-phase diffusion.

Note that the start of the second phase does not kill the diffusion that commenced in the first phase. When the second phase commences, the recently activated nodes (due to the first phase) effectively act as seed nodes for the second phase (in addition to the seed nodes that are separately selected for second phase).

\subsection{A Motivating Example}
\label{sec:example}
We now illustrate two-phase diffusion  with a simple stylized example.
Consider the graph in Figure~\ref{fig:motiv_mpid}(a) where the influence probabilities are as shown. Activation of node $A$ or $B$ or $C$ results in activation of 100 additional nodes each, in the following time step, with probability 1.
Consider a total budget of $k=2$. 
Assume that the live graph in Figure~\ref{fig:motiv_mpid}(b) is destined to occur (we do not have this information at time step 0).
Consider a typical influence maximization algorithm.

Let us study single-phase diffusion on this graph. Let $A$ and $B$ be the two seed nodes selected by 
the algorithm in time step 0. 
In time step 1 as per IC model, 200 additional nodes get influenced.
Since the realized live graph is as shown in Figure~\ref{fig:motiv_mpid}(b), the diffusion stops as there is no outgoing edge from the recently activated nodes to any inactive node. So the diffusion stops at time step \textbf{1}, with \textbf{202} influenced~nodes.

For two-phase diffusion, let the total budget of $2$ be split as $1$ each for the two phases,
and let the second phase be scheduled to start in time step 3. 
%
Now let us say that at time step 0, the algorithm selects $A$ as the only seed node for first phase. In time step 1, it influences its set of 100 nodes and also node $B$.
In time step 2, 
$B$'s set of 100 nodes get influenced. But more importantly, we know that $C$ is not influenced, thus deducing the absence of edge $BC$ in the live graph. So we are more certain about which live graph 
is likely to occur than we were in the beginning (having eliminated live graphs containing edge $BC$).
Based on this observation, the algorithm would select 
 $C$ as seed node for second phase (in time step 3), which in turn, would influence its set of 100 nodes in the following time step. Thus the process stops at time step \textbf{4} with \textbf{303} influenced nodes. 
%
%
Note that during its first phase, two-phase diffusion is expected to be slower than the single phase one, because of using only partial budget.

\begin{figure}[t]
\centering
\begin{tabular}{cc}
\includegraphics[scale=.4]{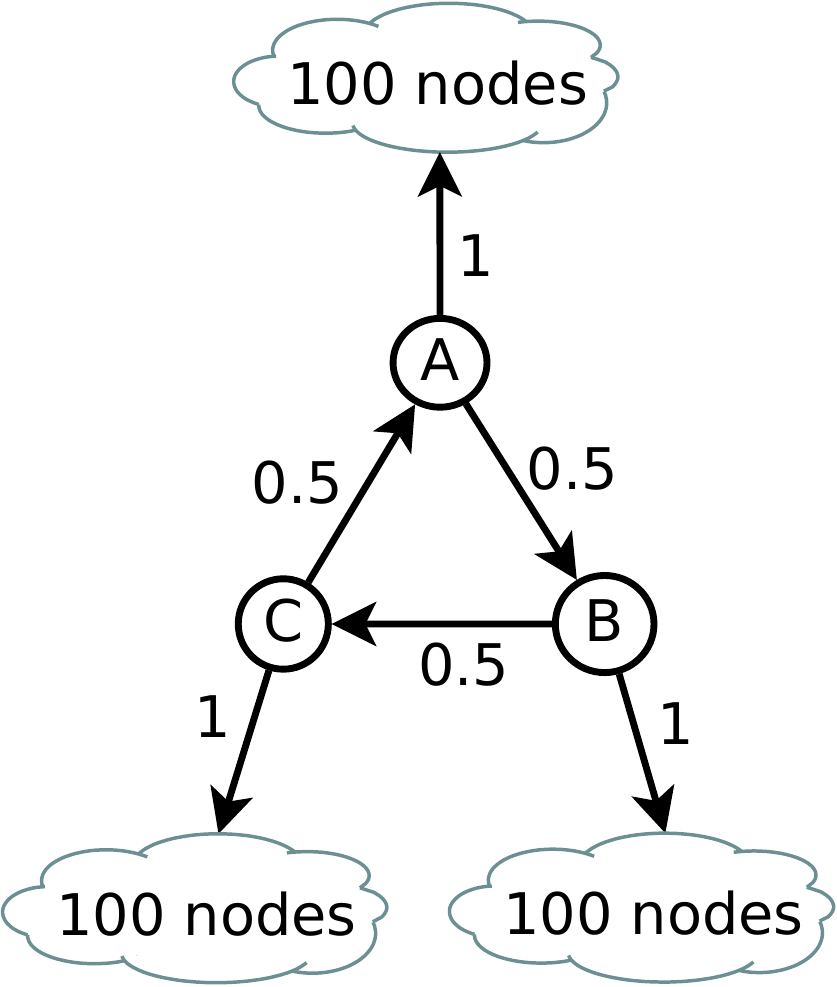}
&
\includegraphics[scale=.4]{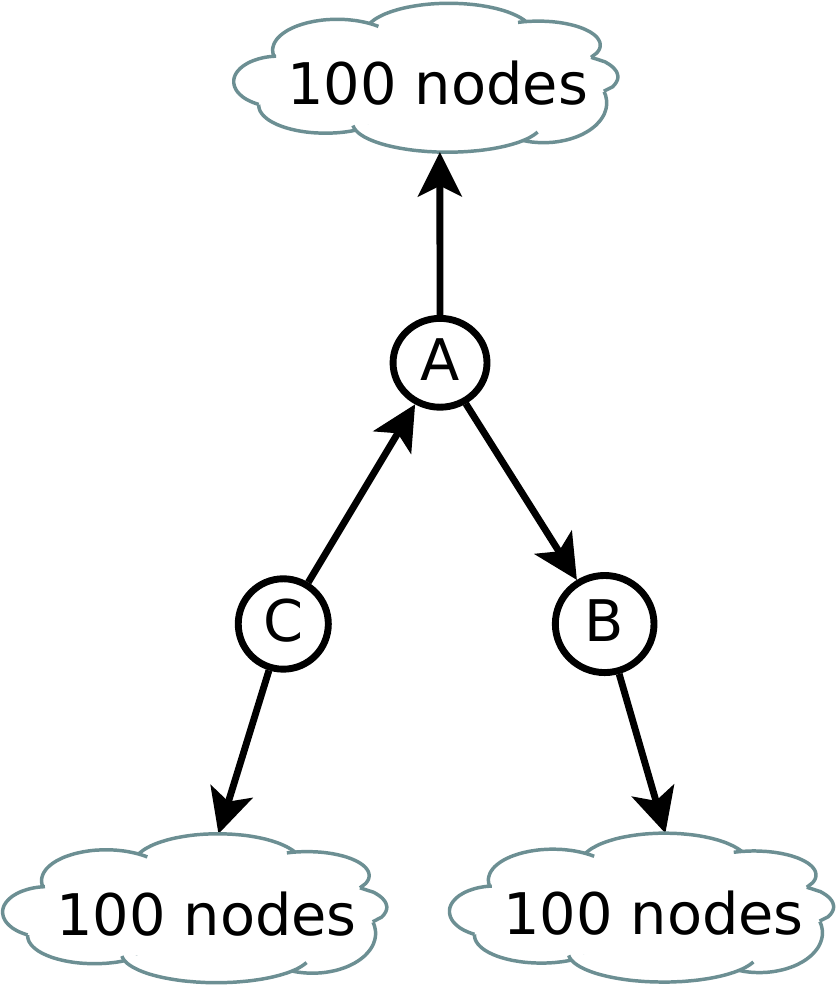}
\\ 
(a) input graph & (b) live graph
\end{tabular}
\caption{\mbox{Multi-phase diffusion: a motivating example}}
\label{fig:motiv_mpid}
\vspace{-2mm}
\end{figure}

If the algorithm had selected $B$ as seed node for~first phase, the diffusion observed after 2 time steps would have guided the algorithm to select $C$ as seed node for second phase, since it would influence $A$ with probability 0.5 (also, given that $B$ is already influenced, selection of $A$ would not influence $C$), thus leading to all 303 nodes getting influenced.
In another case, if $C$ gets selected as seed node for first phase, it would influence all the nodes without having to utilize the entire budget of $2$. So multi-phase diffusion can 
help determine redundancy in seed selection owing to an intermediate check regarding the extent of influence spread. 
Thus it can also
 help achieve a desired spread with a reduced budget.


%

In short, the idea behind two-phase diffusion is~that, for influence maximization algorithms (especially those predicting expected spread over live graphs), reducing the space of possible live graphs results in a better estimate of expected spread, leading to selection of a better seed set.
In fact, two-phase diffusion would facilitate an improvement while using a general influence maximization algorithm, owing to knowledge of already and recently activated nodes, and hence a refined search space for the second phase seed nodes. 

\subsection{The Agenda}
\label{sec:contrib_mpid}

This paper makes the
following specific contributions.

{\scriptsize{$\bullet$}} Focusing on two-phase diffusion in social networks under IC model, we formulate an appropriate objective function 
and investigate its properties. 
We then 
propose an alternative objective function for ease and efficiency of practical implementation. (Section~\ref{sec:problem_mpid})

{\scriptsize{$\bullet$}} We investigate different candidate  algorithms for two-phase influence maximization including extensions of
existing ones that are popular for single phase diffusion. In particular, we propose the use of cross entropy method and a Shapley value based method as promising algorithms for the considered problem.
Seed selection for the two phases could be done in two natural ways: (a) myopic or (b) farsighted. (Section~\ref{sec:algo})

{\scriptsize{$\bullet$}} With extensive simulations on real-world datasets, we study the performance of the proposed algorithms to get an idea how two-phase diffusion would perform, even when used most na\"ively. (Section~\ref{sec:simulations})

{\scriptsize{$\bullet$}} To achieve the best performance out of  two-phase diffusion, we focus on two constituent problems, namely, (a) {\em budget splitting}: how to split the total available budget between the two phases and (b) {\em scheduling}: when to commence the second phase. 
Through a deep investigation of the nature of observations, we propose efficient algorithms for the combined optimization problem of budget splitting, scheduling, and seed sets selection.
We then present key insights from a detailed simulation study. (Section~\ref{sec:practical})

{\scriptsize{$\bullet$}} We conclude the paper with some notes and possible future directions to this work.
(Section~\ref{sec:conclusion_mpid})

\section{Two Phase Diffusion: A Model and Analysis}
\label{sec:problem_mpid}

	Let $k \leq n$ be the total budget, that is, the sum of the number of seed nodes that can be selected in the two phases put together. At time step 0, suppose $k_1$ seed nodes are selected for the first phase and at time step, say $d$, $k_2$ (where $k_2 = k - k_1$) seed nodes are selected for the second phase. 
Our objective is to maximize the expected number of influenced nodes at the end of two-phase diffusion. 
In what follows, we assume $k_1,k_2,d$ to be given.
We study the problem of optimizing over these parameters in Section~\ref{sec:practical}.

\subsection{Objective Function}
\label{sec:objectivefn}

Let $X$ be a live graph obtained by sampling edges for a given graph $G$.
Let $\sigma ^X (S)$ be the number of nodes reachable from seed set $S$ in $X$, that is, the number of nodes influenced at the end of the diffusion process that starts at $S$, if the resulting live graph is $X$. Let $p(X)$ be the probability of occurrence of $X$. So the number of influenced nodes at the end of the process, in expectation, is
$\sigma (S) = \sum_X p(X) \sigma ^X (S)$.
%
It has been shown that $\sigma ^X (S)$, and hence $\sigma (S)$, are non-negative, monotone increasing, and submodular \cite{kempe2003maximizing}.

%
We now formulate an appropriate objective function that measures the expected number of influenced nodes at the end of two-phase diffusion.
Let $S_1$ be the seed set for the first phase
and $X$ be the live graph that is destined to occur ($X$ is not known at the beginning of diffusion, but we know $p(X)$ from edge probabilities in $G$).
Let $Y$ be the partial observation at time step $d$, owing to the observed diffusion. 
As we will be able to classify activated nodes as already activated and recently activated at time step $d$, we assume that $Y$ conveys this information.
That is, from $Y$, the set of already activated nodes $\mathcal{A}^Y$ and the set of recently activated nodes $\mathcal{R}^Y$ at time step $d$, can be determined.
Given $Y$, we can now update the probability of occurrence of a live graph $X$ by $p(X|Y)$.

Now at time step $d$, we should select a seed set that maximizes the final influence spread, considering that nodes in $\mathcal{R}^Y$
will also be effectively acting like seed nodes for second phase.
Let $S_2 ^{O(Y,k_2)}$ be an optimal set of $k_2$ nodes to be selected as seed set, given the occurrence of partial observation $Y$ (which implicitly gives $\mathcal{A}^Y,\mathcal{R}^Y$). 
So
for all $S_2 ' \subseteq N \setminus S_1$ such that $|S_2 ' | \leq k_2$ (it is optimal to have $|S_2 ' | = k_2$ owing to monotone increasing property of $\sigma(\cdot)$),
we have

\vspace{-3mm}
\begin{small}
\begin{equation}
\begin{split}
 \sum_{X} p(X|Y) \sigma^{X \setminus \mathcal{A}^Y} (\mathcal{R}^Y \cup S_2 ^{O(Y,k_2)}) 
 \\
 \geq \sum_{X} p(X|Y) \sigma^{X \setminus \mathcal{A}^Y} (\mathcal{R}^Y \cup S_2 ')
\end{split}
\label{eqn:ineq}
\end{equation}
\vspace{-2mm}
\end{small}

%
\noindent
where ${X \setminus \mathcal{A}^Y}$ is the graph derived from $X$ by removing nodes belonging to $\mathcal{A}^Y$.
%
        Note that we can write $S_2 ^{O(Y,k_2)}$ as $S_2 ^{O(X,S_1,d,k_2)}$, since $Y$ can be uniquely obtained for a given $d$ and particular $X$ and $S_1$. 
        %

%
%
%

So assuming that, given a $Y$, we will select an optimal seed set for the second phase, our objective is to select an optimal $S_1$ (seed set for first phase). Now, as $Y$ is unknown at the beginning of the first phase, the objective function, say $\mathbb{F}(S_1,d,k_2)$,
is an expected value with respect to all such $Y$'s.
Until Section~\ref{sec:practical}, we assume $k_2$ and $d$ to be given, and so we write $\mathbb{F}(S_1,d,k_2)$ as $f(S_1)$.
So,
$f(S_1)$ is

~
\vspace{-5mm}
\begin{small}
\begin{align}
&\;\;
\sum_{Y} p(Y)  \big\{ |\mathcal{A}^{Y}| + \sum_{X} p(X|Y) \sigma^{X \setminus \mathcal{A}^Y}(\mathcal{R}^Y \cup S_2 ^{O(Y,k_2)})   \big\}
\label{eqn:first}
\\ &\hspace{-2mm}= 
\sum_{Y} p(Y)  \big\{ |\mathcal{A}^{Y}| + \sum_{X} p(X|Y) \sigma^{X \setminus \mathcal{A}^Y}(\mathcal{R}^Y \cup S_2 ^{O(X,S_1,d,k_2)})   \big\}
\nonumber
\\ &\hspace{-2mm}= 
\sum_{Y} p(Y) \sum_{X} p(X|Y) \big\{ |\mathcal{A}^{Y}| +  \sigma^{X \setminus \mathcal{A}^Y}(\mathcal{R}^Y \cup S_2 ^{O(X,S_1,d,k_2)})   \big\}
\nonumber
\\
&\;\;\;\;\;\;\;\;\;\;\;\;\;\;\;\;\;\;\big(\because |\mathcal{A}^{Y}| = |\mathcal{A}^{Y}|\sum_{X} p(X|Y) = \sum_{X} p(X|Y)|\mathcal{A}^{Y}| \big)
\nonumber
\\ &\hspace{-2mm}= 
\sum_{Y} p(Y) \sum_{X} p(X|Y) \sigma^X (S_1 \cup S_2 ^{O(X,S_1,d,k_2)}) 
\label{eqn:basic}
\\ &\hspace{-2mm}= 
\sum_{X} \sum_{Y} p(Y) p(X|Y) \sigma^X (S_1 \cup S_2 ^{O(X,S_1,d,k_2)})  
\nonumber
\end{align}
%
\begin{equation}
\label{eqn:f}
\therefore \; f(S_1) = \sum_X p(X) \sigma^X (S_1 \cup S_2 ^{O(X,S_1,d,k_2)}) 
\end{equation}
\vspace{-2mm}
\end{small}

%
Note that at time step $d$, the choice of $S_2 ^{O(X,S_1,d,k_2)}$ depends 
not only on $X$, but
on partial observation $Y$, and hence on all live graphs that could result from $Y$ (just as in single phase, choice of the best seed set depends on all live graphs that could result from the given graph $G$).
It is easy to prove on similar lines as \cite{kempe2003maximizing} that, the problem of maximizing $f(\cdot)$ is NP-hard.

On both sides of Inequality (\ref{eqn:ineq}), adding $|\mathcal{A}^{Y}|$ and taking convex combination over all $Y$'s, LHS transforms to Expression (\ref{eqn:first}) which we have shown to be equivalent to Expression (\ref{eqn:f}). Transforming RHS in Inequality (\ref{eqn:ineq}) on similar lines, we have

\vspace{-2mm}
\begin{small}
\begin{displaymath}
\sum_{X} p(X) \sigma^X (S_1 \cup S_2 ^{O(X,S_1,d,k_2)})  \geq \sum_{X} p(X) \sigma^X (S_1 \cup S_2 ')
\end{displaymath}
\vspace{-2mm}
\end{small}

\noindent
We call this inequality, the {\em optimality of} $S_2 ^{O(X,S_1,d,k_2)}$. 
%

\noindent
We now show how to compute $f(\cdot)$ using an example.

\vspace{-5mm}
\begin{wrapfigure}{r}{0.16\textwidth}
\centering
\vspace{2mm}
\includegraphics[scale=.33]{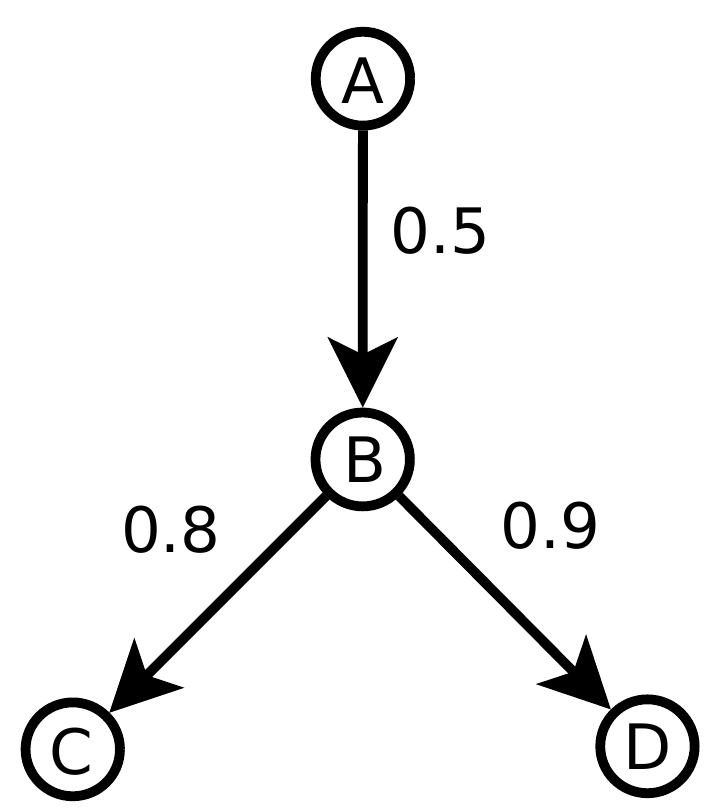}
\caption{Example
}
\label{fig:submod_counter}
\vspace{-.3cm}
\end{wrapfigure}
\noindent
\begin{example}
\label{eg:obj_fn}
%
(Figure~\ref{fig:submod_counter})
Consider 
$S_1=\{A\}$, $k_2=1$, and $d=1$.
The table below
lists the two possibilities of $Y$ ($\mathcal{A}^Y,\mathcal{R}^Y$) at $d=1$. 
The set $S_2^{O(Y,k_2)}$ is easy to compute for both the cases.
So $f(\{A\})= \mathbb{F}(\{A\},1,1) = \sum_X p(X) \sigma^X (\{A\} \cup S_2 ^{O(X,\{A\},1,1)}) = 3.8$.
\end{example}

\begin{center}
\vspace{3mm}
\begin{footnotesize}
\begin{tabular}{c|c|c|c|c|c}
\hline \hline  
\multicolumn{5}{c}{\T \B ~~~~~~~~~~~~~~~~~~$S_1=\{A\}, k_2=1, d=1$} \\
\hline \T \B \hspace{-.4cm}
  \multirow{3}{*}{$X$} & \multirow{3}{*}{$p(X)$} & \multicolumn{2}{c|}{\T \B $Y$} & \multirow{3}{*}{$S_2^{O(Y,k_2)}$}  & \multirow{3}{*}{$f(S_1)$}  \\ 
  \cline{3-4} \hspace{-.4cm} 
 & & \multirow{2}{*}{$\mathcal{A}^Y$} & \multirow{2}{*}{$\mathcal{R}^Y$}  & & \\
 \hspace{-.4cm} 
 & & & & & \\
  \hline \hline \T \B  \hspace{-.4cm}
  $\{AB,BC,BD\}$ & 0.36 & \multirow{4}{*}{$\{A\}$}  & \multirow{4}{*}{$\{B\}$} &  \multirow{4}{*}{$\{C\}$} & 4\\
   \cline{1-2}\cline{6-6} \hspace{-.4cm} \T \B
  $\{AB,BC\}$  & 0.04 &  & &   & 3 \\
   \cline{1-2}\cline{6-6} \hspace{-.4cm} \T \B
     $\{AB,BD\}$ & 0.09 & & &  & 4 \\
   \cline{1-2}\cline{6-6} \hspace{-.4cm} \T \B
     $\{AB\}$ & 0.01 &  & &  & 3\\
  \hline \T \B \hspace{-.4cm}
  $\{BC,BD\}$ &  0.36 & \multirow{4}{*}{$\{A\}$} & \multirow{4}{*}{$\{\}$} & \multirow{4}{*}{$\{B\}$} 
  &  4 \\
  \cline{1-2}\cline{6-6} \hspace{-.4cm} \T \B
   $\{BC\}$ & 0.04  &  &  & & 3\\
   \cline{1-2}\cline{6-6} \hspace{-.4cm} \T \B
   $\{BD\}$ & 0.09  & &  & & 3\\
   \cline{1-2}\cline{6-6} \hspace{-.4cm} \T \B
   $\{\}$ & 0.01  & &  & & 2\\ 
   \hline \hline
\end{tabular}
\end{footnotesize}
\end{center}


\subsection{Properties of the Objective Function}
\label{sec:props}

\begin{property}
\label{prop:monotone}
$f(\cdot)$ is non-negative and monotone increasing.
\end{property}
\begin{proof}
$f(\cdot)$ is non-negative since $\sigma ^X (\cdot)$ is non-negative.
Consider $S_1 \subset T_1$. Then,

~
\vspace{-5mm}
\begin{small}
\begin{align*}
f(T_1)
&=
\; \sum_{X} p(X) \sigma^X (T_1 \cup T_2 ^{O(X,T_1,d,k_2)})
\\ & \geq
\; \sum_{X} p(X)   \sigma^X (T_1 \cup S_2 ^{O(X,S_1,d,k_2)})
   \\ & \geq 
\; \sum_{X} p(X)   \sigma^X (S_1 \cup S_2 ^{O(X,S_1,d,k_2)})
\;   =f(S_1)
\end{align*}
\vspace{-2mm}
\end{small}

\noindent
The first inequality is from optimality of $T_2 ^{O(X,T_1,d,k_2)}$  and the second one from monotonicity of $\sigma ^X (\cdot)$.
\end{proof}

\noindent
As $f(\cdot)$ is monotone increasing and $|S_1| \leq k_1$, given a fixed $k_1$, it is optimal to select $S_1$ such that $|S_1|=k_1$.

\begin{property}
\label{prop:nonsubmodular}
{$f(\cdot)$ is neither submodular nor supermodular.}
\end{property}
\begin{proof}
We prove this using a simple counterexample network in Figure~\ref{fig:submod_counter}.
Consider $d = 3$ and $k_2 = 1$. 

Considering $S_1 = \{\}$, $T_1 = \{D\}$, $i = C$, 
we get $f(S_1 \cup \{i\} ) = 2.95$, $f(S_1) = 2.7$, $f(T_1 \cup \{i\} ) = 3.5$, $f(T_1) = 2.9$.
So we have 
$f(S_1 \cup \{i\} ) - f(S_1) < f(T_1 \cup \{i\} ) - f(T_1)$
for some $T_1$, $S_1 \subset T_1$,  $i \notin T_1$, which proves that  $f(\cdot)$ is not submodular.

Considering $S_1 = \{\}$, $T_1 = \{B\}$, $i = A$, 
we get $f(S_1 \cup \{i\} ) = 3.84$, $f(S_1) = 2.7$, $f(T_1 \cup \{i\} ) = 3.98$, $f(T_1) = 3.7$.
So we have 
$f(S_1 \cup \{i\} ) - f(S_1) > f(T_1 \cup \{i\} ) - f(T_1)$
for some $T_1$, $S_1 \subset T_1$,  $i \notin T_1$, which proves that  $f(\cdot)$ is not supermodular.
\end{proof}

\begin{remark}
Simulations on test graphs showed the satisfiability of the diminishing returns property in most cases, that is, for most $S_1,T_1,i$ such that $S_1 \subset T_1 \subset N$ and  $i \in N\setminus T_1$, $f(S_1 \cup \{i\} ) - f(S_1) \geq f(T_1 \cup \{i\} ) - f(T_1)$.
\end{remark}

\begin{property}
\label{prop:subadditive}
\mbox{$f(\cdot)$ is subadditive.}
\end{property}
\begin{proof}
Let $V_1=S_1\cup T_1$
 and $V_2^{O(X,V_1,d,k_2)}$ be an optimal set of $k_2$ nodes given the observation corresponding to $X,d$ starting with seed set $V_1$.
 So,

 \vspace{-2mm}
 \begin{small}
\begin{align*}
& \hspace{-1mm}  f(S_1)+f(T_1) \\
& \hspace{-2mm} =
\sum_{X} p(X) \{ \sigma^X (S_1 \cup S_2 ^{O(X,S_1,d,k_2)}) 
+ \sigma^X (T_1 \cup T_2 ^{O(X,T_1,d,k_2)}) \}
\\& \hspace{-2mm} \geq
\sum_{X} p(X) \{ \sigma^X (S_1 \cup V_2 ^{O(X,V_1,d,k_2)}) 
+ \sigma^X (T_1 \cup V_2 ^{O(X,V_1,d,k_2)}) \}
\\& \hspace{-2mm} \geq
\sum_{X} p(X) \sigma^X (S_1 \cup T_1 \cup V_2 ^{O(X,V_1,d,k_2)})
\\& \hspace{-2mm} =
\sum_{X} p(X) \sigma^X (V_1 \cup V_2 ^{O(X,V_1,d,k_2)})
=
f(V_1) = f(S_1 \cup T_1)
\end{align*}
\vspace{-2mm}
\end{small}

\noindent
The first inequality is from optimality of $S_2 ^{O(X,S_1,d,k_2)}$ and $T_2 ^{O(X,T_1,d,k_2)}$, and the second one from subadditivity of $\sigma ^X (\cdot)$ (since submodularity and non-negativity  implies subadditivity).
\end{proof}

Owing to the sequential nature of the considered problem, dynamic programming seems to be a natural approach. However, there are two major issues with its usage. Owing to NP-hardness of the single phase influence maximization problem, it is impractical to compute $S_2^{O(X,S_1,d,k_2)}$ in Equation~(\ref{eqn:f}). So the first issue is that finding an optimal solution to a subproblem itself is computationally infeasible. Second, the number of possible subproblems is exponential in the number of nodes (different $S_1$'s would almost certainly result in different $Y$'s, for any given $X$ and $d$). So even if one solves the subproblem approximately, 
the probability of reusing the stored solutions is negligible.
Also
as stated earlier, there exists an approximation algorithm for maximizing a subadditive function
\cite{feige2009maximizing}. However, owing to its relatively high running time, we leave it out of our study. It would be interesting though to develop more efficient algorithms that could exploit the subadditivity of $f(\cdot)$.

As mentioned earlier,
it is impractical to compute $S_2^{O(X,S_1,d,k_2)}$ in Equation~(\ref{eqn:f}). 
We surmount this difficulty by maximizing an alternative function instead of $f(\cdot)$. To emphasize this point, note that this impractical computation is for finding the objective function value itself, which makes finding an optimal $S_1$, a computationally infeasible task. So the alternative function must be several orders of magnitude faster to compute than $f(\cdot)$.
We now address this problem.

\subsection{An Alternative Objective Function}
\label{sec:compute}

\subsubsection{Using Greedy Hill-climbing Algorithm}
\label{sec:greedy}

Given the occurrence of the partial observation $Y$,
let $S_2 ^{G(Y,k_2)} = S_2 ^{G(X,S_1,d,k_2)}$ be a set of size $k_2$ obtained using the greedy hill-climbing algorithm.
%
Let

\vspace{-1mm}
\begin{small}
\begin{displaymath}
\label{eqn:g}
g(S_1) \overset{\mathcal{MC}}=
\sum_X p(X) \sigma^X (S_1 \cup S_2 ^{G(X,S_1,d,k_2)}) 
\end{displaymath}
\vspace{-3mm}
\end{small}

\noindent
(${\mathcal{MC}}$ means obtained using Monte-Carlo simulations).

\begin{theorem} 
\label{thm:nemhauser}
For a non-negative, monotone increasing, submodular function $\mathcal{F}$, let $S^G$ be a set of size $k$ obtained using greedy hill-climbing. Let $S^O$ be a set that maximizes the value of $\mathcal{F}$ over all sets of size $k$. Then 
for any $\epsilon>0$, there is a $\gamma>0$ such that by using $(1 + \gamma)$-approximate values for $\mathcal{F}$, we obtain a $(1-\frac{1}{e}-\epsilon)$-approximation~\cite{kempe2003maximizing}.
\end{theorem}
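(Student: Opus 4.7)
The plan is to combine the classical Nemhauser--Wolsey--Fisher analysis of the greedy algorithm for monotone submodular maximization with a careful bookkeeping of multiplicative oracle errors. First, I would establish the exact-oracle baseline: letting $S_0, S_1, \ldots, S_k$ denote the greedy chain built with exact evaluations of $\mathcal{F}$ and $S^O$ an optimal size-$k$ set, submodularity together with monotonicity imply
$$\mathcal{F}(S^O) \leq \mathcal{F}(S_i) + \sum_{v \in S^O}\bigl[\mathcal{F}(S_i \cup \{v\}) - \mathcal{F}(S_i)\bigr],$$
and the exact greedy choice yields $\mathcal{F}(S_{i+1}) - \mathcal{F}(S_i) \geq \tfrac{1}{k}[\mathcal{F}(S^O) - \mathcal{F}(S_i)]$. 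Unrolling this recurrence gives $\mathcal{F}(S_k) \geq (1 - (1 - 1/k)^k)\mathcal{F}(S^O) \geq (1 - 1/e)\mathcal{F}(S^O)$, which is \cite{nemhauser1978analysis}.

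Second, I would introduce approximate values $\widetilde{\mathcal{F}}$ satisfying $\mathcal{F}(S)/(1+\gamma) \leq \widetilde{\mathcal{F}}(S) \leq (1+\gamma)\mathcal{F}(S)$. Any element chosen greedily with respect to $\widetilde{\mathcal{F}}$ still achieves a $(1 - \eta(\gamma))$-fraction of the best true marginal gain at that step, where $\eta(\gamma)\to 0$ as $\gamma\to 0$; this is because a multiplicative error of $1+\gamma$ on both $\widetilde{\mathcal{F}}(S_i \cup \{v\})$ and $\widetilde{\mathcal{F}}(S_i)$ translates, via monotonicity of $\mathcal{F}$, to a bounded distortion of the marginal ranking. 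Substituting this weaker per-step guarantee into the recurrence above yields, after unrolling $k$ steps, $\mathcal{F}(S_k) \geq (1 - 1/e - \delta(\gamma))\mathcal{F}(S^O)$ with $\delta(\gamma)\to 0$ as $\gamma\to 0$.

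Third, given any prescribed $\epsilon > 0$, I would invert the function $\delta(\cdot)$ and choose $\gamma$ small enough that $\delta(\gamma) \leq \epsilon$, giving the claimed $(1 - 1/e - \epsilon)$-approximation. Note that the hypothesis of the theorem only requires existence of such a $\gamma$, so no explicit rate is demanded.

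The main obstacle will be the second step: quantitatively propagating the multiplicative oracle error through the greedy marginals, since a naive bound produces a per-step slack that accumulates as $O(k\gamma)$ rather than $O(\gamma)$. The cleanest route is to work throughout with the normalized quantity $\rho_i := \mathcal{F}(S_i)/\mathcal{F}(S^O)$ and derive a recurrence of the form $1 - \rho_{i+1} \leq (1 - \tfrac{1}{k}(1 - \eta(\gamma)))(1 - \rho_i)$, whose fixed-point analysis absorbs all the $(1+\gamma)$ factors into a single effective contraction rate and yields a cumulative loss of $O(\gamma)$, independent of $k$.
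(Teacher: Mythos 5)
The paper offers no proof of this statement: it is imported verbatim from Kempe, Kleinberg, and Tardos, who attribute the exact-oracle case to Nemhauser, Wolsey, and Fisher and assert the approximate-value extension as a routine modification. Your first and third steps reproduce that standard argument correctly, and your remark that the theorem only demands the \emph{existence} of some $\gamma=\gamma(\epsilon)$ --- which is free to depend on $k$, $n$, and the instance --- is precisely what makes the extension easy.

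The step that would fail as written is the core of your second step, together with the ``cleanest route'' at the end. A $(1+\gamma)$ multiplicative error on the \emph{values} $\widetilde{\mathcal{F}}(S_i)$ and $\widetilde{\mathcal{F}}(S_i\cup\{v\})$ does not yield a $(1-\eta(\gamma))$ multiplicative guarantee on the selected \emph{marginal}: when the best true marginal at step $i$ is much smaller than $\mathcal{F}(S_i)$, the oracle noise (of order $\gamma\,\mathcal{F}(S_i)$) swamps it, and the chosen element's true marginal can be arbitrarily far below any constant fraction of the best one. Hence the recurrence $1-\rho_{i+1}\le\bigl(1-\tfrac{1}{k}(1-\eta(\gamma))\bigr)(1-\rho_i)$ with $\eta$ independent of $k$ is not available, and the cumulative loss really is of order $k\gamma\,\mathcal{F}(S^O)$, not $\gamma\,\mathcal{F}(S^O)$. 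What does hold is the additive per-step bound $\mathcal{F}(S_{i+1})-\mathcal{F}(S_i)\ \ge\ \tfrac{1}{k}\bigl(\mathcal{F}(S^O)-\mathcal{F}(S_i)\bigr)-2\gamma\,\mathcal{F}(S^O)$, obtained from $\mathcal{F}(S_i\cup\{v\})\ge\mathcal{F}(S_i\cup\{v^*\})/(1+\gamma)^2$ and the observation that monotonicity gives $\mathcal{F}(S_i\cup\{v^*\})\le\mathcal{F}(S^O)$ for $i<k$; unrolling yields $\mathcal{F}(S_k)\ge(1-1/e)\mathcal{F}(S^O)-2k\gamma\,\mathcal{F}(S^O)$, and taking $\gamma\le\epsilon/(2k)$ finishes the proof. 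So keep the ``naive'' accumulation you dismiss as an obstacle --- it is exactly what the theorem needs --- and drop the $k$-independent refinement, which is both unnecessary and, as stated, false.
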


So the greedy hill-climbing algorithm would provide an approximation guarantee that is arbitrarily close to $(1- \frac{1}{e})$ for maximizing such a function,
if we are able to compute the function value with sufficient accuracy.
The $(1 + \gamma)$-approximate values for $\mathcal{F}$ with small $\gamma$ can be obtained using sufficiently large number of Monte-Carlo simulations 
\cite{kempe2003maximizing}.

\begin{lemma}
\label{lem:f_approx_g}
$g(\cdot)$ gives a $\left( 1-\frac{1}{e}-\epsilon \right)$ approximation to $f(\cdot)$.
\end{lemma}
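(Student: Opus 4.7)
The plan is to condition on the partial observation $Y$, apply Theorem~\ref{thm:nemhauser} pointwise for each $Y$, and then average. Fix any $S_1$ with $|S_1|=k_1$ and, for each realization $Y$ of the time-$d$ observation, define the set function
\begin{equation*}
\phi_Y(S_2) \;=\; \sum_X p(X\mid Y)\, \sigma^X(S_1 \cup S_2),
\end{equation*}
viewed as a function of $S_2 \subseteq N\setminus S_1$. Using the identity $\sigma^X(S_1\cup S_2)=|\mathcal{A}^Y|+\sigma^{X\setminus \mathcal{A}^Y}(\mathcal{R}^Y\cup S_2)$ that was already exploited in deriving (\ref{eqn:basic}), $\phi_Y$ differs from the post-observation expected spread $\sum_X p(X\mid Y)\sigma^{X\setminus \mathcal{A}^Y}(\mathcal{R}^Y\cup S_2)$ only by the $S_2$-independent constant $|\mathcal{A}^Y|$. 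Consequently the greedy hill-climbing choice $S_2^{G(Y,k_2)}$ coincides with the greedy choice for $\phi_Y$, and $S_2^{O(Y,k_2)}$ is a genuine maximizer of $\phi_Y$ (by the defining optimality inequality (\ref{eqn:ineq})).

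Next I would verify that $\phi_Y$ satisfies the hypotheses of Theorem~\ref{thm:nemhauser}. For every live graph $X$, $\sigma^X(\cdot)$ is non-negative, monotone increasing, and submodular, so $\sigma^X(S_1\cup\cdot)$ inherits these properties as a function of $S_2$: non-negativity and monotonicity are immediate, and submodularity follows because the chain $S_1\cup S_2\subset S_1\cup T_2$ lies inside the submodularity inequality for $\sigma^X$. A convex combination preserves all three properties, so $\phi_Y$ is non-negative, monotone increasing, and submodular. Taking sufficiently many Monte-Carlo samples to obtain $(1+\gamma)$-approximate function values, Theorem~\ref{thm:nemhauser} then yields
\begin{equation*}
\phi_Y\bigl(S_2^{G(Y,k_2)}\bigr) \;\geq\; \left(1-\tfrac{1}{e}-\epsilon\right)\phi_Y\bigl(S_2^{O(Y,k_2)}\bigr).
\end{equation*}

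Finally, I would multiply both sides by $p(Y)$ and sum over $Y$. Using $\sum_Y p(Y)\,p(X\mid Y)=p(X)$ together with the fact that $Y$ is uniquely determined by $(X,S_1,d)$ (so $S_2^{G(Y,k_2)}=S_2^{G(X,S_1,d,k_2)}$ and $S_2^{O(Y,k_2)}=S_2^{O(X,S_1,d,k_2)}$), the two sides collapse respectively to $g(S_1)$ and $(1-\tfrac{1}{e}-\epsilon)\, f(S_1)$, via exactly the algebraic manipulation already performed between (\ref{eqn:first}) and (\ref{eqn:f}). This delivers the claimed bound $g(S_1) \geq (1-\tfrac{1}{e}-\epsilon)\, f(S_1)$.

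The only real obstacle is bookkeeping: one must be careful that the inner greedy in the definition of $g(\cdot)$ is executed on an objective whose maximizer is $S_2^{O(Y,k_2)}$, so that Theorem~\ref{thm:nemhauser} applies verbatim to $\phi_Y$, and that the Monte-Carlo approximation error $\gamma$ is controlled uniformly over the (exponentially many) observations $Y$ that the greedy may face. Once $\phi_Y$ is correctly identified as a non-negative, monotone, submodular function of $S_2$, no further analytical difficulty arises and the result is a pointwise-then-average argument.
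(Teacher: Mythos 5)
Your proof is correct and follows essentially the same route as the paper's: the paper defines $\Phi_{S_1}^X(S)=\sigma^X(S_1\cup S)$, notes it is non-negative, monotone, and submodular, and invokes Theorem~\ref{thm:nemhauser} to pass from $S_2^{O}$ to $S_2^{G}$ inside the sum over live graphs. Your version is in fact slightly more careful than the paper's one-line inequality, since you make explicit the conditioning on $Y$ (the level at which the greedy guarantee actually applies, because $S_2^{G}$ and $S_2^{O}$ depend on $X$ only through $Y$) before averaging back to $g(S_1)\geq\left(1-\frac{1}{e}-\epsilon\right)f(S_1)$.
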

\begin{proof}
Let $\Phi_T (S) = \sigma (T \cup S)$ and $\Phi_T^X (S) = \sigma ^X (T \cup S)$.
It can be easily shown that $\Phi_T ^X (S)$, and hence $\Phi_T (S)$, are non-negative, monotone increasing, and submodular.
%
So we have

\vspace{-3mm}
\begin{small}
\begin{align*}
g(S_1)
&\overset{\mathcal{MC}}{=} 
\sum_X p(X) \Phi_{S_1}^X (S_2 ^{G(X,S_1,d,k_2)}) 
 \\ & \geq 
\text{\scriptsize{$\left( 1-\frac{1}{e}-\epsilon \right)$}}  \sum_X p(X) \Phi_{S_1}^X (S_2 ^{O(X,S_1,d,k_2)}) 
 \\ &=
\text{\scriptsize{$ \left( 1-\frac{1}{e}-\epsilon \right)$}} f(S_1) 
\end{align*}
\vspace{-3mm}
\end{small}

\noindent
where the first inequality results from Theorem~\ref{thm:nemhauser}.
\end{proof}

So one can aim to maximize $g(\cdot)$ instead of $f(\cdot)$.
However, greedy hill-climbing algorithm itself is expensive in terms of running time (even after optimizations such as in \cite{chen2009efficient}), so we aim to maximize yet another function which would act as a proxy for $g(\cdot)$.

\subsubsection{Using Generalized Degree Discount Heuristic}
\label{sec:gdd}
Consider the process of selecting seed nodes one at a time.
At a given time in the midst of the process, let $\mathcal{X}=$ set of in-neighbors of node $v$ already selected as seed nodes and $\mathcal{Y}=$ set of out-neighbors of $v$ not yet selected as seed nodes.
We develop Generalized Degree Discount (GDD) Heuristic
as an extension to 
the argument for Theorem~2 in \cite{chen2009efficient}: 
if $v$ is not (directly) influenced by any of the already selected seeds, which occurs with probability
{\small $ \prod_{x \in \mathcal{X}} (1-p_{xv}) $}, then the additional expected number of nodes that it influences directly (including itself) is {\small $\big(1+\sum_{y \in \mathcal{Y}} p_{vy}\big)$}.
So until the budget is exhausted, GDD heuristic iteratively selects a node $v$ having the largest value of

\vspace{-1mm}
\begin{small}
\begin{equation}
\label{eqn:gdd_value}
w_v = \Big( \prod_{x \in \mathcal{X}} (1-p_{xv}) \Big) \Big( 1+\sum_{y \in \mathcal{Y}} p_{vy} \Big)
\end{equation}
\vspace{-3mm}
\end{small}

Its time complexity is $ O( k n \Delta )$, where $\Delta$ is the maximum degree in the graph.

Given the occurrence of the partial observation $Y$,
let $S_2 ^{H(Y,k_2)} = S_2 ^{H(X,S_1,d,k_2)}$ be a set of size $k_2$ obtained using the 
GDD heuristic.
Let 

\vspace{-1mm}
\begin{small}
\begin{equation}
\label{eqn:h}
h(S_1) \overset{\mathcal{MC}}{=} 
\sum_X p(X) \sigma^X (S_1 \cup S_2 ^{H(X,S_1,d,k_2)}) 
\end{equation}
\vspace{-3mm}
\end{small}

%
\noindent
We conducted simulations for checking how well $h(\cdot)$ acts as a proxy for $g(\cdot)$, using both weighted cascade and trivalency models. We observed the following. 
\begin{observation}
\label{obs:gdd_approx_greedy}
For almost all $S,T$ pairs: \\
{\scriptsize{$\bullet$}} If $g(T)>g(S)$, then $h(T)>h(S)$ (in particular, this is satisfied for almost all pairs of sets that give excellent objective function values), which ensures that the selected seed set remains unchanged in most cases when we have $h(\cdot)$ as our objective function instead of $g(\cdot)$.\\
{\scriptsize{$\bullet$}} $\frac{h(S)}{h(T)} \approx \frac{g(S)}{g(T)}$, which ensures that the seed set selected by algorithms, which implicitly depend on the ratios of the objective function values given by any two sets, remains unchanged in most cases when we have $h(\cdot)$ as our objective function instead of $g(\cdot)$; two of the algorithms we consider, namely, FACE (Section~\ref{sec:ce_method}) and SPIC (Section~\ref{sec:shapley_method}) belong to this category of algorithms. 
\end{observation}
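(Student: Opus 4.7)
The plan is to establish Observation~\ref{obs:gdd_approx_greedy} empirically through controlled simulation rather than as a formal theorem, since both bullets are worded as ``almost all'' claims and refer to finite-sample behaviour of $g(\cdot)$ and $h(\cdot)$, which themselves are defined via Monte-Carlo estimation. First I would fix a test-bed of representative graphs under both the WC and TV models, spanning a few network sizes and edge densities, and fix values of $d$, $k_1$, and $k_2$ matching the regimes used later in Section~\ref{sec:practical}.

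Next, for each graph I would enumerate (or uniformly sample, when enumeration is infeasible) a large collection of seed-set pairs $(S,T)$ with $|S| = |T| = k_1$. For each $S$, I would estimate $g(S)$ by drawing a common batch of live-graph samples $X$, running the greedy hill-climbing algorithm on the residual graph induced by the observation $Y$ to obtain $S_2^{G(X,S,d,k_2)}$, and averaging $\sigma^X\bigl(S \cup S_2^{G(X,S,d,k_2)}\bigr)$; the analogous procedure with GDD in place of greedy gives $h(S)$. Using common random numbers across $S$ and $T$ sharpens the comparison and dramatically reduces variance on the differences $g(T)-g(S)$ and $h(T)-h(S)$.

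To validate the first bullet, I would tabulate the empirical fraction of sampled pairs with $g(T) > g(S)$ for which $h(T) > h(S)$ as well, reporting this fraction separately for generic pairs and for the sub-population of pairs where $g(T)$ lies near the maximum observed value, since the statement explicitly emphasizes the latter regime. For the second bullet, I would compute the ratio
\begin{equation*}
\rho(S,T) \;=\; \frac{h(S)/h(T)}{g(S)/g(T)}
\end{equation*}
across all sampled pairs and check that its empirical distribution concentrates tightly around $1$. A light theoretical remark would accompany this: GDD is derived (via the argument for Theorem~2 in \cite{chen2009efficient}) as a direct-influence approximation to the same marginal expected spread that greedy hill-climbing approximates via Monte-Carlo, so it is plausible that $h(S) \approx \alpha \cdot g(S)$ for a near-constant $\alpha$ depending on the graph, which would simultaneously yield both bullets.

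The main obstacle is computational rather than mathematical: evaluating $g(\cdot)$ on many candidate sets requires nesting greedy hill-climbing inside a Monte-Carlo loop, and each inner call is already expensive. I would mitigate this using CELF-style lazy evaluation for greedy, sharing live-graph samples across all sets under comparison, and restricting the sampled pairs to meaningful regions of the seed-set space (e.g.\ drawing $S$ and $T$ from unions of outputs of fast heuristics plus a small random perturbation) rather than sampling uniformly from the exponentially large power set. A secondary concern is making the ``almost all'' qualifier precise; I would address this by reporting failure rates with confidence intervals derived from the number of sampled pairs, so that the observation is supported by a quantitative empirical statement rather than an informal impression.
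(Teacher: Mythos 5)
Your proposal is correct and takes essentially the same approach as the paper: the paper establishes this observation purely empirically, stating only that simulations were conducted under both the WC and TV models on moderately sized datasets, with no formal argument given. Your plan is a more detailed and methodologically careful elaboration of that same simulation-based validation (the common-random-numbers, confidence-interval, and pair-sampling refinements go beyond what the paper reports but do not change the nature of the argument).
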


\begin{remark}
One could question, why not use a function $\hat h(S_1) \overset{\mathcal{MC}}{=}  \sum_X p(X) \sigma^X (S_1 \cup S_2 ^{\hat H(X,S_1,d,k_2)})$ instead of $h(S_1)$, where $S_2 ^{\hat H(X,S_1,d,k_2)}$ is a set of size $k_2$ obtained using PMIA algorithm (it has been observed to perform very close to greedy algorithm on practically all relevant datasets, while running orders of magnitude faster \cite{chen2010scalable}). However, though PMIA is an efficient algorithm for single phase influence maximization, it is highly undesirable to use it for computation of objective function value alone (since an algorithm designed to maximize $\hat h(\cdot)$ would require computation of function values for a large number of sets). On the other hand, GDD is orders of magnitude faster than PMIA. Though we use moderately sized datasets for making Observation~\ref{obs:gdd_approx_greedy}, we could stretch the size of datasets by aiming to observe how well $h(\cdot)$ acts as a proxy for $\hat h(\cdot)$.
\end{remark}

Owing to the above justifications, we aim to maximize $h(\cdot)$ instead of $f(\cdot)$, for two-phase influence maximization in the rest of this paper.

\section{Algorithms for Two-Phase Influence Maximization}
\label{sec:algo}

In the previous section, we formulated the objective function for two-phase influence maximization $f(\cdot)$ and studied its properties. In addition to their theoretical relevance, these properties have implications for as to which algorithms are likely to perform well. We present them while describing the algorithms.

%
Let $\mathcal{T}$ be the time taken for computing the objective function value for a set.\\
%
{\scriptsize{$\bullet$}} For $\sigma(\cdot)$, $\mathcal{T} = O(m \mathcal{M})$,
where $\mathcal{M}$ is the number of Monte-Carlo simulations.\\
{\scriptsize{$\bullet$}} For $h(\cdot)$, $\mathcal{T} = O(k_2 n \Delta m \mathcal{M}_1 \mathcal{M}_2)$,
where $\mathcal{M}_1$ and $\mathcal{M}_2$ are the numbers of Monte-Carlo simulations for first and second phases, respectively, and $\Delta$ is the maximum out-degree in the graph.

\subsection{Candidate Algorithms for Seed Selection}

Now we present the algorithms that we consider for seed selection for single phase influence maximization, 
which we later extend to the two-phase case.

\vspace{-1mm}
\subsubsection{Greedy Algorithm}

As described earlier, the greedy (hill-climbing) algorithm for maximizing a function $\mathcal{F}$, selects nodes one at a time, each time choosing a node that provides the largest marginal increase in the value of $\mathcal{F}$, until the budget is exhausted.
Its time complexity is $O(kn\mathcal{T})$.
As noted earlier, though our two-phase objective function is not submodular, we observed that the diminishing returns property was satisfied in most cases; so even though the condition in Theorem~\ref{thm:nemhauser} is not satisfied, the greedy algorithm is likely to perform well.
Further, unlike in the case of single phase influence maximization, we cannot use CELF optimization for the two-phase case owing to its objective function being non-submodular
(though satisfiability of the diminishing returns property in most cases may make it a reasonable approach,
we do not use it so as
to preserve performance accuracy of the greedy algorithm for two-phase influence maximization).


\vspace{-1mm}
\subsubsection{Single/Weighted Discount Heuristics (SD/WD)}

The single discount (SD) heuristic~\cite{chen2009efficient} for a graph $G$ can be described as follows: select the node having the largest number of outgoing edges in $G$, then remove that node and all of its incoming and outgoing edges to obtain a new graph $G'$, again select the node having the largest number of outgoing edges in the new graph $G'$, and continue until the budget is exhausted.
%
Weighted discount (WD) heuristic is a variant of SD heuristic where, sum of outgoing edge probabilities is considered instead of number of outgoing edges.
The time complexity of these heuristics is $ O( k n \Delta )$.
These heuristics run extremely fast and hence can be used for efficient seed selection for very large networks. 


\vspace{-1mm}
\subsubsection{Generalized Degree Discount (GDD) Heuristic}
The generalized degree discount (GDD) heuristic is as described in Section~\ref{sec:gdd}.


\vspace{-1mm}
\subsubsection{PMIA}
\label{sec:pmia}
This heuristic, based on the arborescence structure of influence spread,
is shown to perform close to greedy algorithm and runs orders of magnitude faster \cite{chen2010scalable}.

\vspace{-1mm}
\subsubsection{Fully Adaptive Cross Entropy Method (FACE)}
\label{sec:ce_method}

It has been shown that the cross entropy (CE) method provides an efficient and general method for solving combinatorial optimization problems~\cite{de2005tutorial}.
In our context, the CE method involves an iterative procedure where each iteration consists of two steps, namely,
(a) generating data samples (a vector consisting of a sampled 
candidate seed set) according to a specified distribution and
(b) updating the distribution based on the sampled data to produce better samples in the next iteration.
 We use an adaptive version of the CE method called the {\em fully adaptive cross entropy} (FACE) algorithm~\cite{de2005tutorial}.
 Its time complexity is $O(n\mathcal{T} \mathcal{I})$, where $\mathcal{I}$ is the number of iterations taken for the algorithm to terminate.
 However, the running time can be drastically reduced for single phase diffusion using preprocessing similar to that for greedy algorithm as in \cite{chen2009efficient}.
An added advantage of this algorithm is that it would not only find an optimal seed set, but also implicitly determine how to split the total budget between the two phases and also the delay after which the second phase should be triggered (see Section~\ref{sec:practical}).
%

\vspace{-1mm}
\subsubsection{Shapley Value based - IC Model (SPIC)}
\label{sec:shapley_method}

We consider a Shapley value based method because it is shown to perform well even when the  objective function is non-submodular~\cite{narayanam2010shapley}.
It has been observed that,
in order to obtain the seed nodes after computing Shapley values of the nodes, 
some post-processing is required.
%
%
As the post-processing step under the IC model, we propose the following discounting scheme: 
\\
(a) Since node $x$ would get directly activated because of node $y$ with probability $p_{yx}$, we discount the value of $x$ by multiplying it with $(1-p_{yx})$ whenever any of its in-neighbors $y$ gets chosen in the seed set. 
%
\\
(b) As node $z$ influences node $y$ directly with probability $p_{zy}$, it gets a fractional share of $y$'s value (since $z$ would be influencing other nodes indirectly, through $y$).
So when $y$ is chosen in the seed set, we subtract $y$'s share ($p_{zy}\phi_y$ where $\phi_y$ is the value of $y$ during its selection) from the current value of $z$.
If the value becomes negative because of oversubtraction, we assign zero value to it.
%
%
\\
A node, not already in the seed set, with the highest value after discounting, is then added to the seed set in a given iteration.
In our simulations, we observed that this discounting scheme outperforms the SPIN algorithm (choosing seed nodes one at a time while eliminating neighbors of already chosen nodes~\cite{narayanam2010shapley}).
Assuming $O(n)$ permutations for approximate computation of Shapley value~\cite{narayanam2010shapley} (since exact computation is \#P-hard),
the algorithm's time complexity is approximately 
$O(n \mathcal{T})$.
It is to be noted, however, that the SPIN algorithm~\cite{narayanam2010shapley} is not scalable to very large networks even for single phase influence maximization~\cite{chen2010scalable}, and so isn't SPIC.

\vspace{-1mm}
\subsubsection{Random Sampling and Maximizing (RMax)}

Here, 
we sample $O(n)$ number of sets that satisfy the budget constraint, 
and then assign that set as the seed set which gives the maximum function value among the sampled sets.
Note that this method is different from the random set selection method~\cite{kempe2003maximizing}, 
where only one sample is drawn.
Its time complexity is $O(n\mathcal{T})$.
We consider this method as it is very generic and agnostic to the properties of the objective function, and can be used for optimizing functions with arbitrary or no structure. This method is likely to perform well when the number of samples is sufficiently large.

\subsection{Extension of Algorithms to Two-phase Influence Maximization}

Now we present how the aforementioned single phase influence maximization algorithms can be extended for two-phase influence maximization.
Let $\mathcal{F}_1 (\cdot)$ and $\mathcal{F}_2 (\cdot)$ be objective functions corresponding to 
seed selection in
 first and second phases, respectively.
Consider an influence maximization algorithm $\mathbb{A}$.

We explore two special cases of 
Algorithm 1 
(the notation can be recalled from Sections~\ref{sec:objectivefn} and \ref{sec:compute}):
\begin{tabbing}
1.  Farsighted \= : $\mathcal{F}_1 (S_1) = h(S_1) \, , \,\mathcal{F}_2 (S_2) = \sigma(\mathcal{R}^Y \cup S_2)$ \\
2.   Myopic \> : $\mathcal{F}_1 (S_1) = \sigma(S_1) \, , \,\mathcal{F}_2 (S_2) = \sigma(\mathcal{R}^Y \cup S_2)$ 
\end{tabbing}

\newpage
%
\hrule
\vspace{.1cm}
\noindent\textbf{Algorithm 1} Two-phase general algorithm (IC model)
\hrule
\vspace{.1cm}
\begin{small}
\begin{algorithmic}[1]
\renewcommand{\algorithmicrequire}{\textbf{Input:}}
\renewcommand{\algorithmicensure}{\textbf{Output:}}

\REQUIRE $G = (N, E, \mathcal{P})$, $k_1$, $k_2$, $d$

\textbf{First phase:} 

\STATE Find set of size $k_1$ using $\mathbb{A}$ for maximizing $\mathcal{F}_1 (\cdot)$ on $G$

\STATE Run the diffusion using IC model until time $d$ 

\textbf{Second phase:} 

\STATE On observing $Y$, construct $G^d$ from $G$ by deleting $\mathcal{A}^Y$

\STATE With $\mathcal{R}^Y$ forming partial seed set, find set of size $k_2$ using $\mathbb{A}$ for maximizing $\mathcal{F}_2 (\cdot)$ on $G^d$

\STATE Continue running the diffusion using IC model until no further nodes can be influenced

\ENSURE Seed nodes for the first and second phases at time steps 0 and $d$, respectively 

\end{algorithmic}
\end{small}
\vspace{-.3cm}
\hrulefill

As explained earlier, the second phase objective function assumes that $\mathcal{R}^Y$ forms a partial seed set, hence the above form of $\mathcal{F}_2(\cdot)$.
The farsighted objective function looks ahead and accounts for the fact that there is going to be a second phase and hence attempts to maximize $h(\cdot)$, while the myopic function does not.
Note that heuristics such as PMIA, GDD, WD, SD do not consider the actual objective function for seed selection, and so the myopic and farsighted algorithms are the same for these heuristics.

We now formally prove the effectiveness of two-phase diffusion for influence maximization.

\begin{theorem}
For any given values of $k_1$ and $k_2$, the expected influence achieved using optimal two-phase algorithm is at least as much as that achieved using optimal single phase one. 
\end{theorem}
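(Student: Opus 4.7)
The plan is to show that the optimal two-phase algorithm can always mimic the optimal single-phase algorithm, so its value is at least as large. Let $S^*$ denote an optimal single-phase seed set of size $k = k_1 + k_2$, so the single-phase value is $\sigma(S^*) = \sum_X p(X) \sigma^X(S^*)$. I would then exhibit a particular (possibly suboptimal) two-phase strategy whose expected spread already equals $\sigma(S^*)$: partition $S^*$ as $S^* = S_1 \cup S_2'$ with $|S_1|=k_1$ and $|S_2'|=k_2$, use $S_1$ as the first-phase seed set, and in the second phase (regardless of which partial observation $Y$ occurs) deploy the fixed set $S_2'$.

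The central identity needed is that, under the IC model, the final set of activated nodes depends only on the union of seed nodes deployed and the underlying live graph, not on the times at which they are injected. This is already encoded in the derivation of Equation~(\ref{eqn:basic}) in the paper, where the authors rewrite $|\mathcal{A}^Y| + \sigma^{X\setminus\mathcal{A}^Y}(\mathcal{R}^Y \cup S_2)$ as $\sigma^X(S_1 \cup S_2)$. Hence, for the mimicking strategy,
\begin{equation*}
\sum_Y p(Y) \bigl\{|\mathcal{A}^Y| + \sum_X p(X|Y)\,\sigma^{X\setminus\mathcal{A}^Y}(\mathcal{R}^Y \cup S_2')\bigr\} \;=\; \sum_X p(X)\,\sigma^X(S_1 \cup S_2') \;=\; \sigma(S^*).
\end{equation*}

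Now I would invoke the \emph{optimality of} $S_2^{O(X,S_1,d,k_2)}$ inequality derived in the paper with the particular choice $S_2' = S^*\setminus S_1$ (which has size $k_2$ and is disjoint from $S_1$, as required):
\begin{equation*}
f(S_1) \;=\; \sum_X p(X)\,\sigma^X\bigl(S_1 \cup S_2^{O(X,S_1,d,k_2)}\bigr) \;\geq\; \sum_X p(X)\,\sigma^X(S_1 \cup S_2') \;=\; \sigma(S^*).
\end{equation*}
Maximizing the left-hand side over all $S_1$ with $|S_1|=k_1$ yields $\max_{|S_1|=k_1} f(S_1) \geq \sigma(S^*) = \max_{|S|=k} \sigma(S)$, which is the desired statement.

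There is no serious obstacle; the only subtlety is the observation that allocating seeds at different time steps does not change the final reachable set in the IC model, which is essentially the content of the live-graph representation (and already used implicitly in the paper's derivation of $f(\cdot)$). Once that is articulated, the proof is a two-line application of the optimality of $S_2^{O(X,S_1,d,k_2)}$ together with the suboptimal choice $S_2' = S^* \setminus S_1$.
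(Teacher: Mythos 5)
Your proposal is correct and follows essentially the same route as the paper's own proof: partition the optimal single-phase set $S^*$ into a first-phase part of size $k_1$ and a second-phase part of size $k_2$, apply the optimality of $S_2^{O(X,S_1,d,k_2)}$ to lower-bound $f(S_1)$ by $\sigma(S^*)$, and then maximize over $S_1$. Your explicit remark that the timing of seed injection does not affect the final reachable set in a fixed live graph is a useful articulation of what the paper uses implicitly via Equation~(\ref{eqn:basic}), but it does not change the argument.
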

\begin{proof}
Let $S^*$ be the optimal seed set of cardinality $k=k_1+k_2$ selected in single phase diffusion. Let sets $S_1^*$ and $S_2^*$ be such that $|S_1|=k_1$, $|S_2|=k_2$, $S^*=S_1^* \cup S_2^*$, and $S_1^* \cap S_2^* =\emptyset$. 
Now assuming 
any $d$, it is clear from the optimality of $S_2 ^{O(X,S_1^*,d,k_2)}$ (see derivation of $f(\cdot)$ preceding Equation~(\ref{eqn:f})) that,

\vspace{-2mm}
\begin{small}
\begin{align*}
\sum_X p(X) \sigma^X (S_1^* \cup S_2 ^{O(X,S_1^*,d,k_2)}) \geq \sum_X p(X) \sigma^X (S_1^* \cup S_2^* ) 
\end{align*}
\vspace{-2mm}
\end{small}

\noindent
Note that 
%
the left hand side is $f(S_1^*)$ (Equation~(\ref{eqn:f})) and right hand side is $\sigma(S^*)$. So we have,

\vspace{-2mm}
\begin{small}
\begin{align*}
\max_{S_1} f(S_1) \geq f(S_1^*) \geq \sigma(S^*)
\end{align*}
\vspace{-2mm}
\end{small}

\noindent
The leftmost
and rightmost expressions are the expected spreads using 
 two-phase
and single phase optimal algorithms, respectively, hence the result.
Note that this holds for any $d$.
\end{proof}

\section{A Study to Demonstrate Efficacy of Two-phase Diffusion}
\label{sec:simulations}

In this section, we study how much improvement one can expect by diffusing information in two phases over a social network, even with a simple approach. To start with, we assume that $k_1,k_2,d$ are known and our objective is to find the seed sets for the two phases (we study the problem of optimizing over these parameters in Section~\ref{sec:practical}). 
As a simple and na\"ive first approach, we consider an equal budget split between the two phases, that is, $k_1=k_2=\frac{k}{2}$. 
Furthermore, we consider $d=D$, where $D$ is the length of the longest path in the network, so that by time step $D$, the first phase would have completed its diffusion. In practice, $D$ could be the maximum delay that we are ready to incur in absence of any temporal constraints.
Intuitively, it is clear that one should wait for as long as possible before selecting the seed nodes for second phase, as it would give a larger observation and a reduced search space. We now prove this formally.

\begin{lemma}
\label{lem:high_d_better}
For any given values of $k_1$ and $k_2$, the number of nodes influenced using an optimal two-phase influence maximization algorithm is a non-decreasing function of $d$.
\end{lemma}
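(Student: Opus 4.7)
The plan is to show, for any fixed first-phase seed set $S_1$, that the quantity $\mathbb{F}(S_1, d, k_2)$ (which, with $k_2$ fixed, coincides with $f(S_1)$ defined at delay $d$) is itself non-decreasing in $d$. Once this is established, taking maxima over $S_1$ on both sides immediately yields the claimed monotonicity of the optimal two-phase influence as a function of $d$. So the real work is to compare two delays $d < d'$ for the same $S_1$.

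The key structural observation is that the partial observation $Y_{d'}$ at time $d'$ \emph{refines} the partial observation $Y_d$ at time $d$: the diffusion is monotone, and an observer who records the status of every node up to time $d'$ can recover, as a deterministic function of $Y_{d'}$, what the status at the earlier time $d$ was. In particular, the seed set $S_2^{O(Y_d, k_2)}$, which depends on $Y_{d'}$ only through the coarser statistic $Y_d$, is a valid (deterministic) function of $Y_{d'}$, hence a legitimate candidate second-phase strategy at time $d'$.

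Armed with this, I would invoke optimality of $S_2^{O(Y_{d'}, k_2)}$ in the form given right after Inequality~(\ref{eqn:ineq}): for every realization of $Y_{d'}$,
\begin{equation*}
\sum_X p(X \mid Y_{d'}) \, \sigma^X(S_1 \cup S_2^{O(Y_{d'}, k_2)}) \;\geq\; \sum_X p(X \mid Y_{d'}) \, \sigma^X(S_1 \cup S_2^{O(Y_d, k_2)}).
\end{equation*}
Taking convex combination with weights $p(Y_{d'})$ and then using the identity $p(Y_{d'}) p(X \mid Y_{d'}) = p(X) \, \mathbf{1}[Y_{d'} \text{ consistent with } X]$ (so the double sum collapses to a sum over $X$ alone, exactly as in the derivation preceding Equation~(\ref{eqn:f})), the left-hand side becomes $\mathbb{F}(S_1, d', k_2)$ and the right-hand side becomes $\mathbb{F}(S_1, d, k_2)$. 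This gives $f_{d'}(S_1) \geq f_d(S_1)$ for every $S_1$, and maximizing over $S_1$ completes the argument.

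The main technical obstacle is handling the conditional-to-unconditional expectation bookkeeping cleanly; in particular, explaining why $S_2^{O(Y_d, k_2)}$ may be treated as a $Y_{d'}$-measurable strategy (so that the optimality inequality at time $d'$ applies to it) without conflating what the planner knows at the two times. Once the ``$Y_{d'}$ refines $Y_d$'' point is pinned down, the remaining manipulations are essentially the same tower-of-expectations rearrangement used to derive Equation~(\ref{eqn:f}) from Expression~(\ref{eqn:first}).
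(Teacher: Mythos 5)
Your proposal is correct and follows essentially the same route as the paper: the paper likewise fixes $S_1$, refines each observation $Y_i$ at time $d$ into the observations $Y_{ij}$ at time $d^+>d$, notes that $S_2^{O(Y_i,k_2)}$ remains a feasible (coarser-information) choice under each $Y_{ij}$, and concludes by the optimality of $S_2^{O(Y_{ij},k_2)}$ after the same tower-of-expectations rearrangement. The measurability point you flag is handled implicitly in the paper by indexing the finer observations as $Y_{ij}$, so that $Y_i$ is determined by $Y_{ij}$ by construction.
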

\begin{proof}
Starting from a given first phase seed set $S_1$, let 
$Y_i$'s be the partial observations at time step $d$.
Also, let $Y_{ij}$'s be the partial observations at time step $d^+ > d$ resulting from a given $Y_i$ at time step $d$. 
%
%
By enumerating the partial observations at time step $d$, the expected number of nodes influenced at the end of diffusion, as given in Equation~(\ref{eqn:basic}), can be written~as

\vspace{-2mm}
\begin{small}
\begin{align*}
&
\sum_{i} p(Y_i) \sum_{X} p(X|Y_i) \sigma^X (S_1 \cup S_2 ^{O(Y_i,k_2)}) 
\\&=
\sum_i \sum_{j} p(Y_{ij}) \sum_{X} p(X|Y_{ij}) \sigma^X (S_1 \cup S_2 ^{O(Y_i,k_2)}) 
\\&\leq
\sum_i \sum_{j} p(Y_{ij}) \sum_{X} p(X|Y_{ij}) \sigma^X (S_1 \cup S_2 ^{O(Y_{ij},k_2)}) 
\end{align*}
\vspace{-2mm}
\end{small}

\noindent
which is the expected number of nodes influenced at the end of diffusion, if the second phase starts at time step $d^+ > d$.
The last inequality results from the optimality of $S_2 ^{O(Y_{ij},k_2)}$ for partial observation $Y_{ij}$.
\end{proof}

The following result now follows directly.

\begin{theorem}
\label{thm:D_is_opt}
For any given values of $k_1$ and $k_2$, the number of nodes influenced using an optimal two-phase influence maximization algorithm is maximized when $d=D$.
\end{theorem}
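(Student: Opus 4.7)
The plan is to derive Theorem~\ref{thm:D_is_opt} as an essentially immediate corollary of Lemma~\ref{lem:high_d_better}. First I would invoke the lemma directly: since the expected number of influenced nodes under the optimal two-phase algorithm is a non-decreasing function of $d$, the value at $d = D$ is at least as large as the value at any $d < D$. So the only thing that needs separate attention is what happens for $d > D$.

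Next I would argue that any $d \geq D$ gives the same expected spread as $d = D$. The key observation is that the longest directed path in $G$ has length $D$, so for any live graph $X$ and any first-phase seed set $S_1$, the first-phase diffusion reaches all of its potential nodes within $D$ time steps. Consequently, the partial observation $Y$ at any time step $d \geq D$ is identical (the sets $\mathcal{A}^Y$ and $\mathcal{R}^Y$ at time $D$ already capture everything the first phase will ever contribute, with $\mathcal{R}^Y = \emptyset$ beyond $D$ in any case). Since the optimal second-phase seed set depends only on $Y$, the optimal objective value is constant for $d \geq D$.

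Combining these two observations, $d = D$ lies at the top of the non-decreasing curve and achieves the maximum over all delays. I would state this in one short chain: for any $d' < D$, Lemma~\ref{lem:high_d_better} gives spread$(d') \leq$ spread$(D)$; for any $d' > D$, the above argument gives spread$(d') =$ spread$(D)$; hence $d = D$ is a maximizer.

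There is no real obstacle here; the entire content of the theorem is contained in the lemma, together with the simple observation that the first-phase diffusion cannot propagate for more than $D$ steps. The proof should therefore be only a couple of sentences, essentially a ``follows directly'' argument. The only care needed is to make the ``no additional information after $D$'' step explicit so that the reader sees why $D$, and not some larger delay, is claimed as a maximizer.
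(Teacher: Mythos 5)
Your proposal is correct and takes essentially the same route as the paper, which offers no proof beyond the remark that the theorem ``follows directly'' from Lemma~\ref{lem:high_d_better}. Your additional observation --- that any $d > D$ yields the same partial observation and hence the same optimal value, because the first-phase diffusion cannot propagate for more than $D$ steps --- merely makes explicit what the paper leaves implicit.
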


\begin{remark}
Determining $D$ exactly may be infeasible in practice. For instance, checking whether the first phase has completed its diffusion requires polling at every time step. Also, finding the length of the longest path in the network is known to be an NP-hard problem. 
However, for all practical purposes, $D$ can be approximated by a large enough value based on the network in consideration.
\end{remark}

\subsection{Simulation Setup}

For computing the objective function value and evaluating performance using single phase diffusion, we ran $10^4$ Monte-Carlo iterations (standard in the literature). To set a balance between running time and variance, we ran $10^3$ Monte-Carlo iterations for each of the phases in two-phase diffusion (equivalent to $10^6$ live graphs); the observed variance was negligible. 

As mentioned earlier, for transforming an undirected and unweighted network (dataset) into a directed and weighted network for studying the diffusion process, we consider two popular, well-accepted special cases of the IC model, namely, the {weighted cascade (WC) model} and the {trivalency (TV) model}.
We first conduct simulations on the Les Miserables (LM) dataset  \cite{knuth1993stanford} consisting of 77 nodes and 508 directed edges in order to study the performances of computationally intensive farsighted algorithms for two-phase influence maximization.
%
%
%
%
For studying two-phase diffusion on a larger dataset, we consider an academic collaboration network obtained from co-authorships in the ``High Energy Physics - Theory'' papers published on the e-print arXiv from 1991 to 2003. It contains 15,233 nodes and 62,774 directed edges, and is popularly denoted as NetHEPT. This network exhibits many structural features of large-scale social networks  and is widely used for experimental justifications, for example, in \cite{kempe2003maximizing, chen2009efficient, chen2010scalable}.
We also conducted experiments on a smaller collaboration network Hep-Th having 7,610 nodes and 31,502 directed edges \cite{newman2001structure}. As the results obtained were very similar, we present the results for only the NetHEPT dataset.
%
%
%
For two-phase diffusion, as a na\"ive first approach as mentioned earlier, we consider equal budget split, $k_1=k_2=\frac{k}{2}$, and $d=D$. 

\begin{remark}
In the two-phase influence maximization problem, seed selection is not computationally intensive, but seed evaluation is. At the end of the first phase, only one $Y$ is possible in practice; however, for the purpose of evaluation as part of simulations, we need to consider $\mathcal{M}_1$ (Monte-Carlo iterations for first phase) number of $Y$'s. This severely restricts the size of network under study. We believe the NetHEPT dataset suffices for our study owing to its social networks-like features and its wide usage for experimentation in the literature. 
\end{remark}

\begin{remark}
The simulations can also be run using a single level of Monte-Carlo iterations instead of two levels as described above.
For instance, instead of deciding the diffusion over each edge dynamically as in the above approach, one can decide an entire live graph in advance so that there is no requirement of separate Monte-Carlo iterations for the two phases.
However, the number of Monte-Carlo iterations (live graphs) required to compute the value with same variance as the above approach would be $\Theta(\mathcal{M}_1 \mathcal{M}_2)$.
\end{remark}

We now list the parameter values for the considered algorithms, specifically for the LM dataset.
For the detailed FACE algorithm, the reader is referred to \cite{de2005tutorial}.
We initialize the method with distribution $(\frac{\gamma}{n},\ldots,\frac{\gamma}{n})$, that is, each node has a probability of $\frac{\gamma}{n}$ of getting selected in any sample set in the first iteration (where $\gamma$ is the budget which is $k,k_1,k_2$ for single phase diffusion, first phase, and second phase, respectively).
In any iteration, the number of samples (satisfying budget constraint) is bounded by
 $\mathcal{N}_{\text{min}}=n$ and
$\mathcal{N}_{\text{max}}=20n$, the number of elite samples (samples that are deemed to have good enough function value) is
$\mathcal{N}_{\text{elite}}=\lceil \frac{n}{4} \rceil$.
%
We use a weighted update rule for the distribution where, in any given iteration, the weight of any elite sample is proportional to its function value. 
The smoothing factor (telling how much weight is to be given to the current iteration as against the previous iterations) that we consider is $\alpha = 0.6$.
%
In our simulations, we observed that in most cases, the FACE algorithm converged in 5 iterations (extending till 7 at times) by giving a reliable solution ({\em reliable} refers to the case wherein the method deduces that it has successfully solved the problem). Also, the total number of samples drawn in any iteration was $n$ in almost all cases (it did not exceed $2n$ in any iteration).
That is, the total number of samples over all iterations was approximately $5n$.
So for direct comparison with SPIC and RMax, we consider $5n$ permutations in order to compute the approximate Shapley values of all the nodes~\cite{narayanam2010shapley},
and $5n$ sampled sets for RMax.

\setlength{\tabcolsep}{.5em}
\begin{table}[t]
\caption{\mbox{Gain of two-phase diffusion 
over single phase} one on LM dataset (WC model) with ${k = 6, k_1 = k_2 = 3, d = D}$}
\centering
\begin{footnotesize}
\hspace{-2mm}
\begin{tabular}{c|c|c|c|c|c|c}
\hline \hline
\T \B
\multirow{4}{*}{\hspace{-4mm} Method} & \multicolumn{3}{c|}{Expected spread} &  \multicolumn{3}{c}{Running Time for} \\
\T \B
	& \multicolumn{3}{c|}{} & \multicolumn{3}{c}{seed selection (seconds)} \\
\cline{2-7}
\T \B
	& Single & Two- & \%  & Single & Myopic & Farsight  \\
\T \B
					& phase & phase & gain & phase  & 2-phase & 2-phase \\
\hline
\T \B
\hspace{-4mm} FACE 	& 46.2 & 50.7 &  9.7 & 15 & 29  & 1209 \\
\T \B
\hspace{-4mm} SPIC 	& 45.9 & 50.4 &  9.8 & 16 & 31   & 1272 \\
\T \B
\hspace{-4mm} Greedy & 46.2 & 49.7 &  7.6 & 10  &  11 &  390  \\
\T \B
\hspace{-4mm} PMIA 	& 46.2 & 49.4 &  6.9 & 0.2 & 0.2 & 0.2  \\
\T \B
\hspace{-4mm} GDD 	& 45.8 & 49.3 &  7.6 & 0.002 & 0.002 & 0.002 \\
\T \B
\hspace{-4mm} WD 	& 45.7 & 48.7 &  6.6 & 0.002 & 0.002 & 0.002 \\
\T \B
\hspace{-4mm} SD	& 40.5 & 44.5 &  9.9 & 0.002 & 0.002 & 0.002  \\
\T \B
\hspace{-4mm} RMax 	& 35.9 & 46.6 &  29.8 & 6 & 12  & 751 \\
\hline \hline
\end{tabular}
\end{footnotesize}
\label{tab:percent_improve}
\vspace{-2mm}
\end{table}
\setlength{\tabcolsep}{.6em}

\subsection{Simulation Results}

Throughout the rest of this paper, we present results for a few representative settings. We have conducted simulations over a large number of settings and the results presented here are very general in nature.

  \begin{figure*}
 \begin{minipage}{5.5cm}
   \includegraphics[scale=.42]{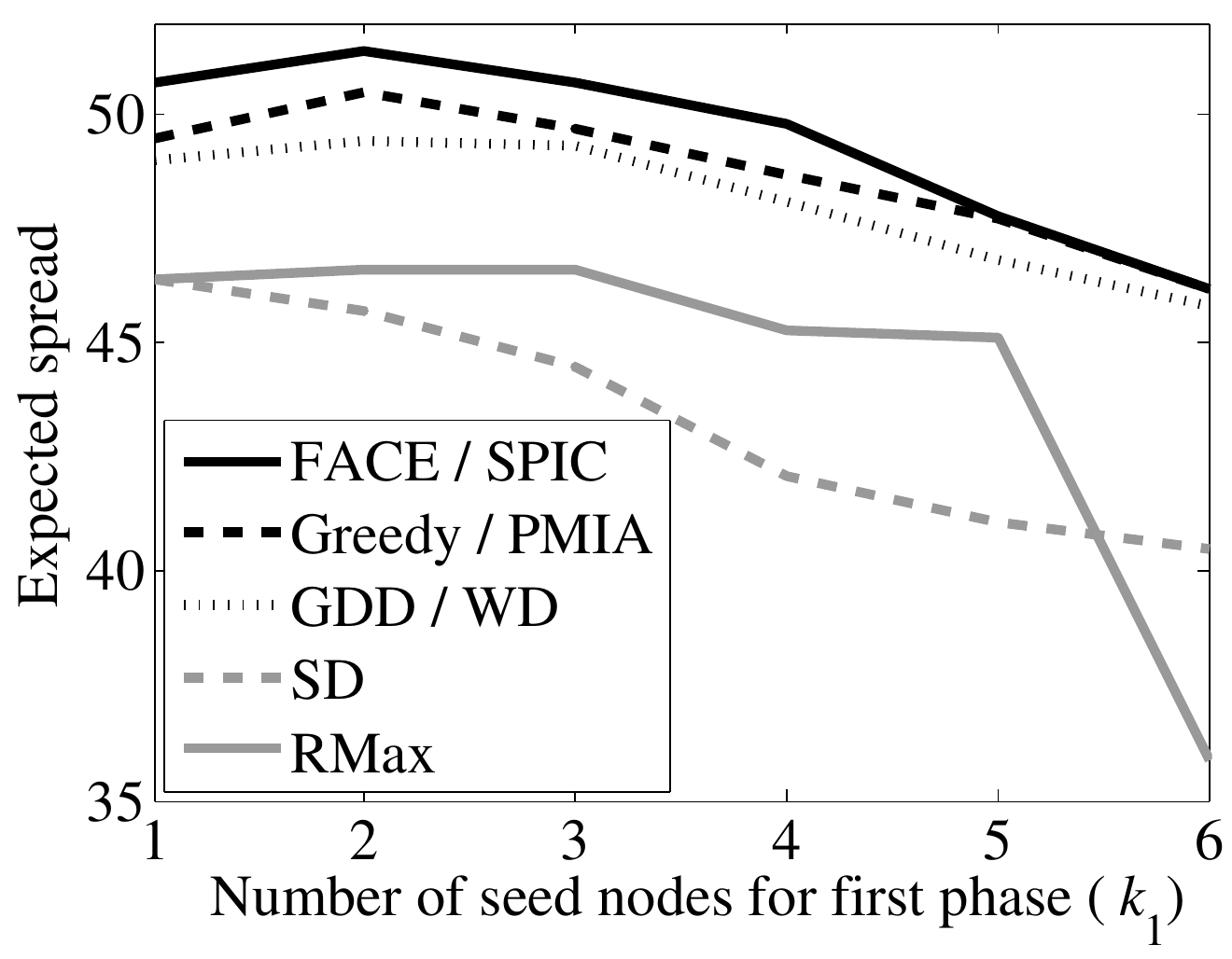} 
 \\ \centering (a)
\end{minipage}
\begin{minipage}{5.5cm}
\includegraphics[scale=.415]{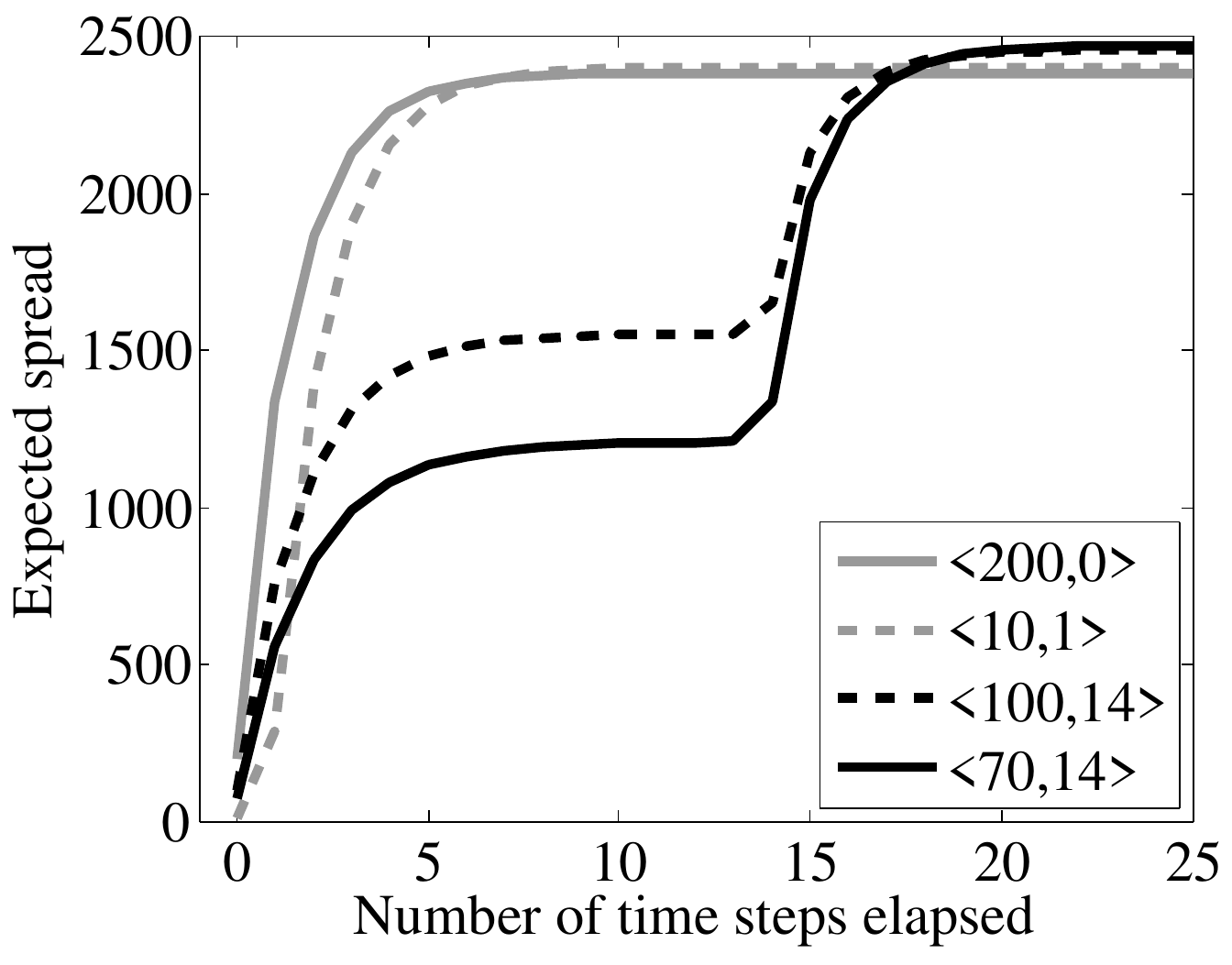} 
  \\ \centering (b)
\end{minipage}
 \begin{minipage}{5.75cm}
 \includegraphics[scale=.415]{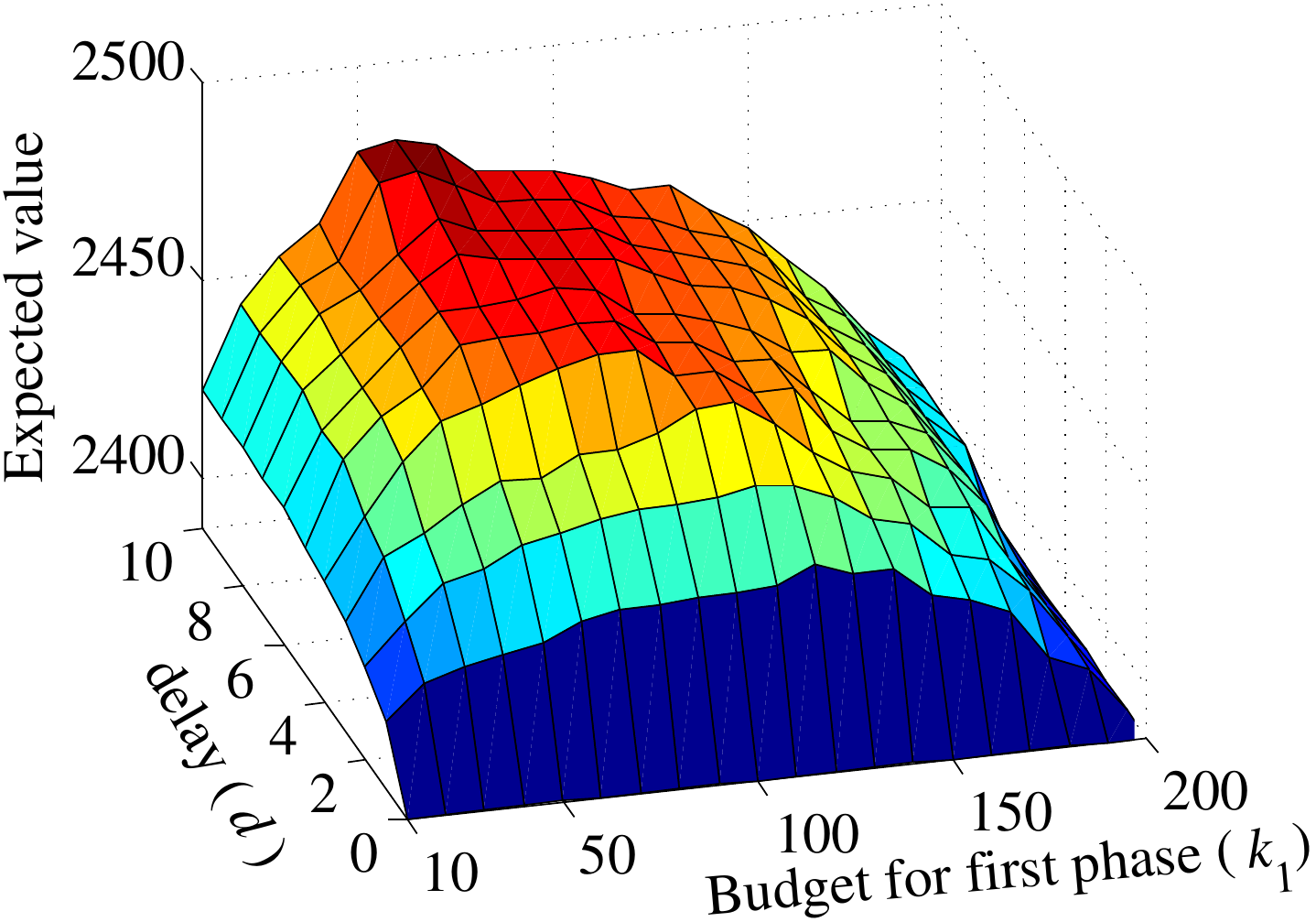} 
     \\ \centering (c)
\end{minipage}
   \caption{ 
   (a) Performance of algorithms for different values of ${k_1}$ on LM dataset under WC model (${k=6}$, ${d=D}$),
(b) Typical progression of diffusion with time for different $<$$k_1,d$$>$ pairs on NetHEPT  under WC model ($k=200$),
(c) 3D plot considering a range of $<$$k_1,d$$>$ pairs for $\delta=1$ on NetHEPT  under WC model ($k=200$)
   }
   \label{fig:plots_mpid}
  \end{figure*}

\begin{observation}
FACE algorithm is very effective for single phase influence maximization, performing at par with greedy and PMIA or better for most values of $k$. SPIC also performs almost at par with them. To justify the effectiveness of two-phase diffusion process, it was necessary to consider these high performing single phase algorithms. Furthermore, GDD heuristic performs very closely to these algorithms, while taking orders of magnitude less time.
\end{observation}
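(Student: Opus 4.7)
The statement is an empirical observation comparing single-phase algorithms, so the plan is to validate it through controlled Monte-Carlo simulations rather than by a formal argument. First I would fix the experimental scaffolding already in place in Section~\ref{sec:simulations}: use the Les Miserables graph (77 nodes) and, for larger-scale corroboration, the NetHEPT graph, instantiating edge probabilities via both the WC model and the TV model. For each dataset, each model, and each target budget $k$ in a representative sweep, I would run every candidate single-phase algorithm -- Greedy, PMIA, GDD, WD, SD, SPIC, FACE, RMax -- with the parameter settings specified in the text (in particular $\mathcal{N}_{\min}=n$, $\mathcal{N}_{\max}=20n$, $\mathcal{N}_{\text{elite}}=\lceil n/4 \rceil$, $\alpha=0.6$ for FACE, and $5n$ permutations/samples for SPIC and RMax to equalize computational effort) and record both the selected seed set $S_k$ and the wall-clock seed-selection time.

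Next I would estimate $\sigma(S_k)$ using the $10^4$ Monte-Carlo iterations already adopted as the paper's standard so that the variance in the spread estimate is small enough to separate algorithms that are genuinely different. I would then tabulate spreads and running times in the style of Table~\ref{tab:percent_improve} and plot the spread-vs.-$k$ curves in the style of Figure~\ref{fig:plots_mpid}(a). The three sub-claims of the observation map to three concrete checks: (i) verify that $\sigma(S_k^{\text{FACE}})$ is within negligible additive gap of, or exceeds, both $\sigma(S_k^{\text{Greedy}})$ and $\sigma(S_k^{\text{PMIA}})$ for most $k$; (ii) verify that $\sigma(S_k^{\text{SPIC}})$ sits in the same band; and (iii) verify that $\sigma(S_k^{\text{GDD}})$ lies close to this band while its recorded running time is several orders of magnitude smaller than that of Greedy and FACE. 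Cross-checking across WC and TV would certify that the conclusions are not artifacts of a single probability assignment, and each randomised algorithm (FACE, SPIC, RMax) would be repeated over multiple independent runs to stabilise the reported means.

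The main obstacle I anticipate is the running-time blowup of Greedy and especially FACE on the larger graph: a single pass of Greedy with objective $\sigma(\cdot)$ costs $O(kn\mathcal{T})$ with $\mathcal{T}=O(m\mathcal{M})$, and FACE multiplies this by the number of iterations it takes to converge. To keep NetHEPT tractable I would apply CELF-style lazy evaluation for Greedy's outer loop, which is sound here because $\sigma(\cdot)$ is submodular in single-phase diffusion, and would use the finer-grained $k$-sweep on LM while sampling $k$ more coarsely on NetHEPT. A secondary concern is Monte-Carlo noise when two algorithms' spreads are nearly tied; I would address this by locally increasing the number of simulations whenever two estimated spreads fall within one standard error of each other, so that the qualitative conclusion "performs at par" can be made statistically meaningful rather than attributable to sampling fluctuation.
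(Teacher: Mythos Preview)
Your proposal is correct and mirrors the paper's own treatment: the observation is not proved but is an empirical finding drawn from the single-phase column of Table~\ref{tab:percent_improve} (and analogous runs over a range of $k$ and both WC/TV models), so validating it amounts to exactly the controlled Monte-Carlo comparison you describe with the same algorithm parameterizations. Your added care about CELF for Greedy on NetHEPT and about resolving near-ties by locally increasing the simulation count goes slightly beyond what the paper spells out, but does not change the approach.
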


Table~\ref{tab:percent_improve} shows the improvement of the na\"ive two-phase diffusion over single phase one for the considered algorithms on the LM dataset (WC model).
The performances of myopic and their farsighted counterparts were observed to be almost same (the maximum difference in the expected spread was observed to be 0.2 on the scale of 77 nodes),
so they share a common column for the expected spread.
These results, in conjunction with other results for $k_1\neq k_2$ and $d<D$ (which are not presented here), show that the myopic algorithms perform at par with the farsighted ones, while running orders of magnitude faster.
A possible reason for the excellent performance of myopic algorithms is that, the first set of $k_1$ nodes selected by most influence maximization algorithms, are generally the ones which would give a large enough observation and a well refined search space for the second phase seed set. 
Also, as mentioned earlier, there is no distinction between myopic and farsighted algorithms for heuristics such as PMIA, GDD, WD, SD, that do not consider the actual objective function for seed selection, so their running times also are the same.
The results obtained using TV model were qualitatively similar with a very slight dip in the \% gain with respect to the expected spread; the running times were significantly lower for most algorithms owing to lower edge probabilities in TV model as compared to WC model (in the case of LM dataset) and so the diffusion/simulation would terminate faster.

The results for NetHEPT are presented in Table~\ref{tab:impwithk}. For the purpose of this section, we need to only look at the first rows ($k_1=k_2$) of both WC and TV models.

\begin{observation}
Though it is clear that two-phase diffusion strictly performs better than single phase diffusion, the amount of improvement depends on the value of $k$ as well as the diffusion model under consideration (see Table~\ref{tab:impwithk}). 
\end{observation}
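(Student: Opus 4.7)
The statement is an empirical observation rather than a deductive claim, so the natural way to establish it is through a carefully designed simulation study whose outcomes populate Table~\ref{tab:impwithk}. The plan is to run the two-phase and single-phase algorithms side by side on NetHEPT while varying both the total budget $k$ and the edge-weighting scheme (WC vs.\ TV), and to report the percentage gain $\frac{\mathbb{F}(S_1,d,k_2)-\sigma(S^*)}{\sigma(S^*)}\cdot 100$ in each cell. Because the preceding theorem already guarantees that two-phase diffusion weakly dominates single-phase for any fixed $k_1,k_2,d$, the ``strictly performs better'' portion of the statement is essentially inherited, and the simulations only need to quantify the \emph{magnitude} of the gap as $k$ and the diffusion model vary.

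The steps I would carry out, in order, are the following. First, fix the na\"ive parameter regime used throughout this section, namely $k_1=k_2=k/2$ and $d=D$, so that the only variables in play are $k$ and the cascade model; this isolates the two quantities the observation is about. Second, for each choice of $k$ in a geometrically spaced grid (say, values small enough to expose the ``early'' regime and large enough to approach saturation on NetHEPT) and for each of WC and TV, I would run a high-performing algorithm (GDD is the natural choice since it runs orders of magnitude faster than greedy/FACE while tracking them closely per the preceding observations) to select seed sets in both modes. Third, using the two-level Monte-Carlo protocol already specified ($\mathcal{M}_1 = \mathcal{M}_2 = 10^3$ for two-phase, $10^4$ for single phase), I would estimate the expected spreads and compute the percentage improvement, checking that the Monte-Carlo variance is small enough that the reported differences are statistically meaningful.

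Finally, I would tabulate the numbers and read off the two qualitative dependencies the observation claims. Dependence on $k$ should appear as a non-monotone profile of the percentage gain: for very small $k$ the first-phase diffusion barely explores the graph, so the partial observation $Y$ carries little information and the gain is modest; for moderate $k$ the observation is informative enough to meaningfully refine the second-phase search space and the gain peaks; for very large $k$ the single-phase spread itself approaches $n$, leaving little room for improvement. Dependence on the diffusion model should appear as a systematic offset between the WC and TV rows, driven by the markedly different edge-probability regimes (uniform $\{0.001,0.01,0.1\}$ vs.\ reciprocal-degree): lower probabilities under TV shrink $|\mathcal{R}^Y|$ and hence the informational content of $Y$, damping the gain.

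The main obstacle is computational rather than conceptual: each two-phase evaluation on NetHEPT costs $\Theta(\mathcal{M}_1\mathcal{M}_2)$ propagations, and for the study to be persuasive one needs enough $(k,\text{model})$ cells that both dependencies are visible above Monte-Carlo noise. I would mitigate this by exploiting the observation that GDD's running time is essentially independent of the objective-function evaluations (it does not call $h(\cdot)$ during selection), so the expensive part is confined to the final spread estimation, which parallelises trivially across live-graph samples. A secondary obstacle is ensuring that the single-phase baseline is genuinely strong in every cell, so that the reported gains are not artifacts of a weak baseline; using GDD for both modes and cross-checking a subset of cells against greedy/PMIA addresses this concern.
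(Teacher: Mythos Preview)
Your methodology is essentially what the paper does: the observation is not proved deductively but is simply read off from Table~\ref{tab:impwithk}, which is populated by running single-phase and two-phase (with $k_1=k_2=k/2$, $d=D$) on NetHEPT for $k\in\{50,100,200,300\}$ under both WC and TV, and reporting the percentage gain. So as a plan for \emph{establishing} the observation, your proposal is correct and matches the paper.

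Where you diverge from the paper is in the speculative layer you add on top. Your prediction that TV should show a \emph{lower} gain than WC (because smaller edge probabilities shrink $|\mathcal{R}^Y|$ and hence the informational content of $Y$) is contradicted by the paper's actual numbers: on NetHEPT, TV consistently shows \emph{higher} improvement ($\approx 5\%$) than WC ($\approx 2$--$4.5\%$). Likewise, your predicted non-monotone ``peak at moderate $k$'' story is only weakly visible in the WC row (3.5, 1.8, 3.5, 4.4) and not at all in the TV row (5.0, 5.4, 5.4, 4.8). None of this undermines the observation itself, which only asserts that the improvement \emph{depends} on $k$ and on the model without committing to a direction; but it does mean the mechanistic explanation you offer would not survive contact with the data. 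The safer approach, and the one the paper takes, is to let the table speak for itself and refrain from predicting the sign of the model-dependence a priori.
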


Note that the amount of improvement is significant, especially when the company is concerned with monetary profits or a long-term customer base.
We now attempt to further improve what we can get by using the two-phase diffusion.

\begin{table}[t!]
\caption{
\% Improvement of two-phase diffusion over single phase depending on $k$}
\centering
\begin{tabular}{c|c|c|c|c|c}
\hline \hline
\T \B
Model	&	$k \rightarrow$				&	50		&	100		&	200		&	300
\\ \hline \T \B
\multirow{3}{*}{WC}	&	\% Improvement ($k_1=k_2$) &	3.5		&	1.8		&	3.5		&	4.4
\\ \T \B
&	Opt. \% improvement &	4.5		&	2.0		&	4.0		&	4.5
\\ \T \B
	&	Optimizing $k_1$				&	15		&	35		&	70		&	105
\\ \hline \T \B
\multirow{3}{*}{TV}	&	\% improvement ($k_1=k_2$)	&	5.0		&	5.4		&	5.4		&	4.8
\\ \T \B
&	Opt. \% improvement	&	6.0		&	6.0		&	6.0		&	5.0
\\ \T \B
	&	Optimizing $k_1$						&	18		&	35		&	70	&	105
\\ \hline \hline
\end{tabular}
\label{tab:impwithk}
\vspace{-2mm}
\end{table}

\section{Getting the Best out of Two-Phase Diffusion}
\label{sec:practical}

Till now, we assumed $k_1=k_2$ and $d=D$, and we needed to determine the best seed sets (a) of size $k_1$ for first phase and (b) of size $k_2$ for second phase based on the observed diffusion after a delay of $d$ time steps. However, in practical situations, there is also a need to determine (c) an appropriate split of the total budget $k$ into $k_1$ and $k_2$ as well as (d) an appropriate delay $d$ (we have proved that $d=D$ is optimal in absence of temporal constraints, but this may not be the case in their presence). In this section, we address these issues.
Henceforth, we use only farsighted algorithms, as they take the values of $k_2$ and $d$ into account while computing the objective function value.

  \begin{figure*}
 \begin{minipage}{5.6cm}
   \includegraphics[scale=.415]{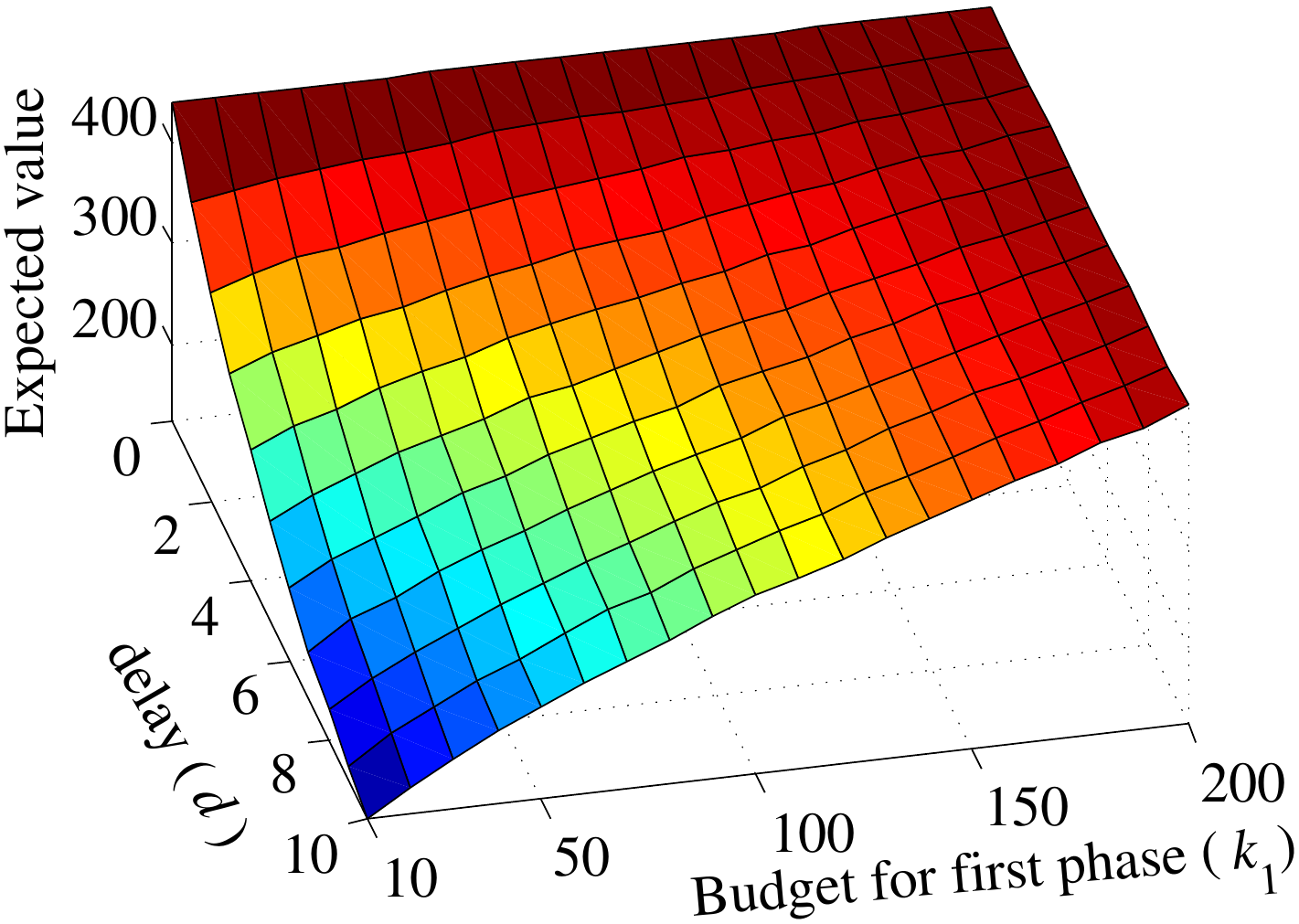} 
 \\ \centering (a) $\delta=0.85$
\end{minipage}
\begin{minipage}{5.6cm}
\includegraphics[scale=.415]{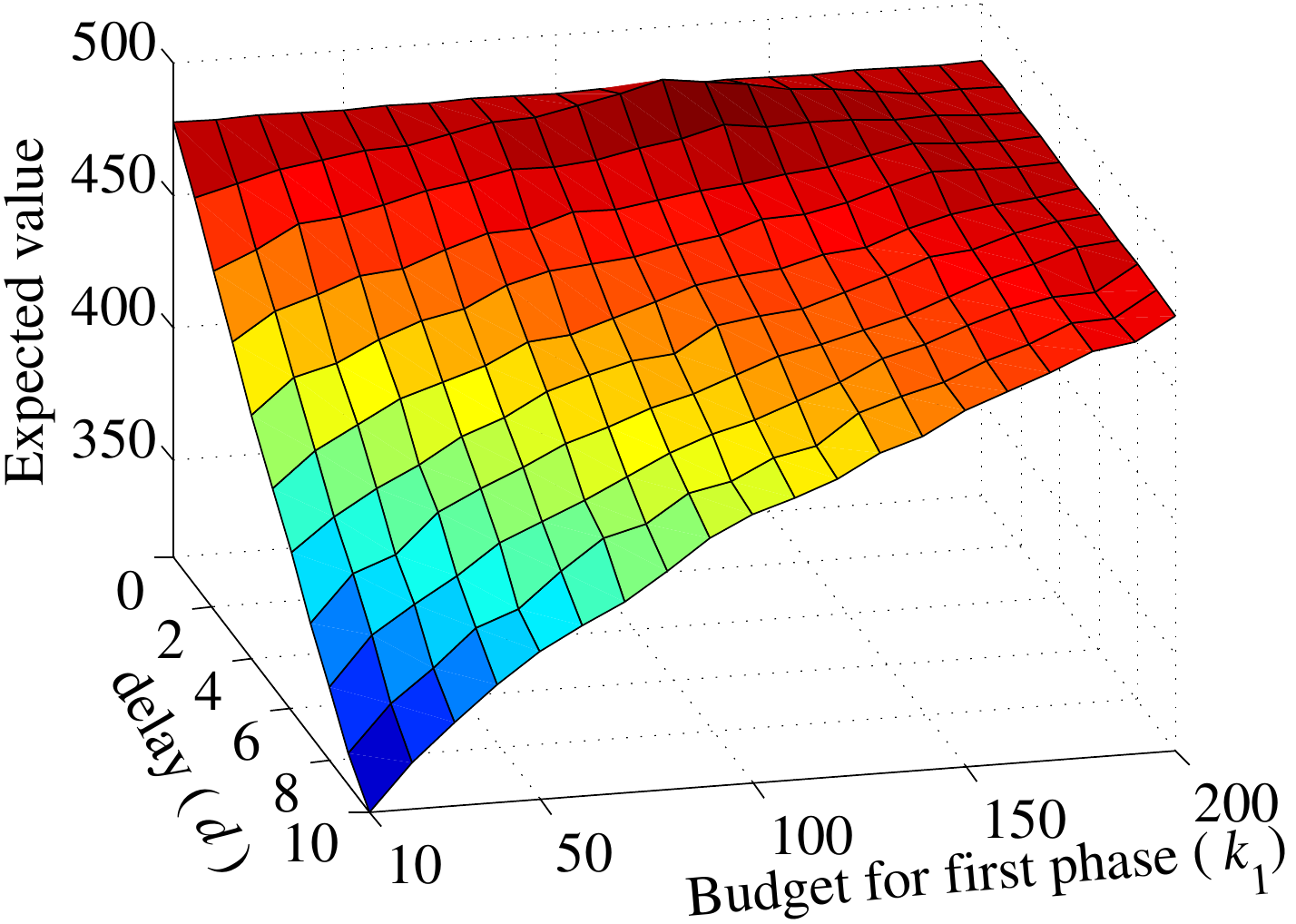}
  \\ \centering (b) $\delta=0.95$
\end{minipage}
 \begin{minipage}{5.6cm}
\includegraphics[scale=.415]{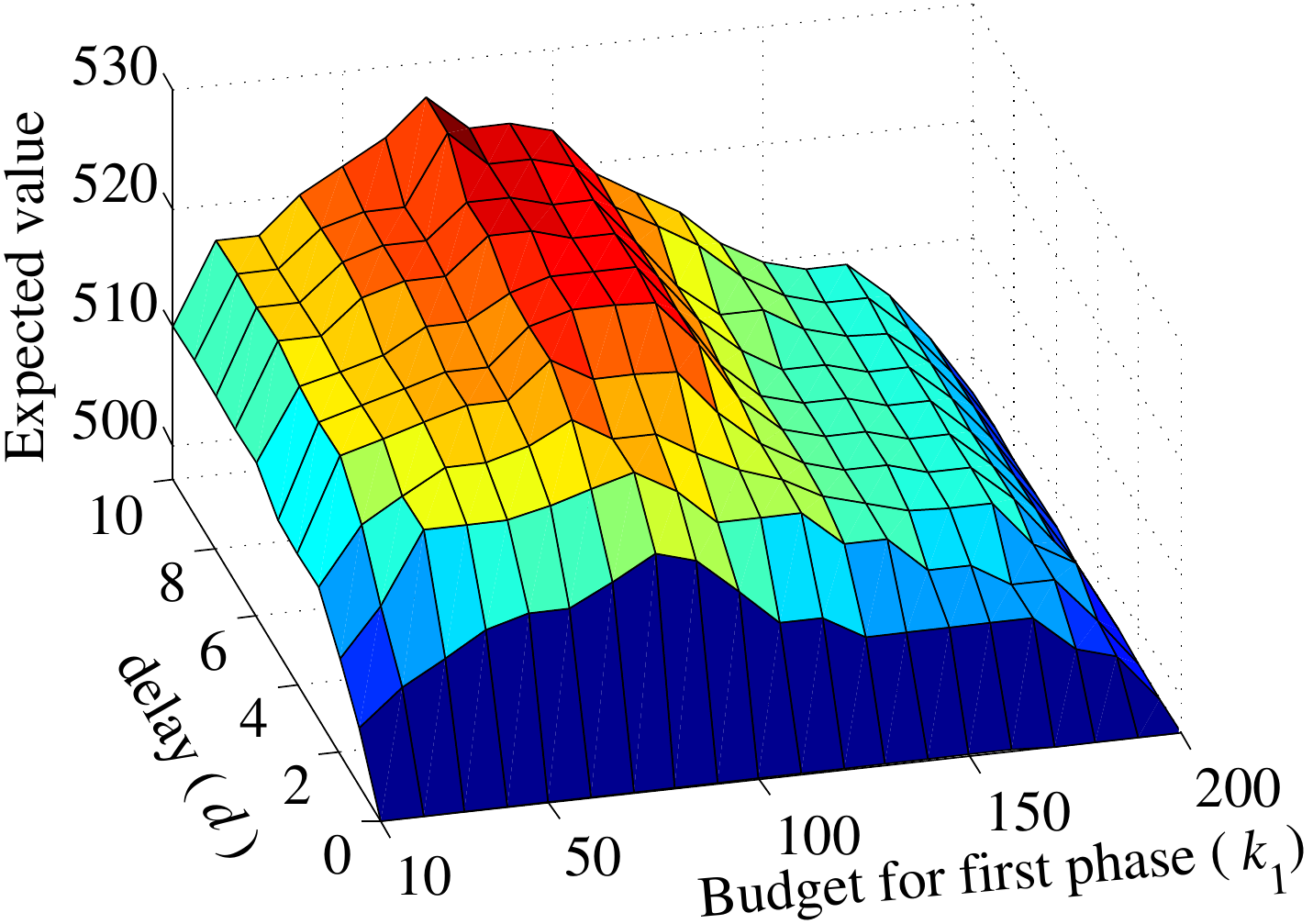}
     \\ \centering (c) $\delta=1$
\end{minipage}
   \caption{ 
   3D plots considering a wide range of $<$$k_1,d$$>$ pairs for different values of $\delta$ on NetHEPT dataset under TV model ($k=200$)
   (note the reversed delay axis in (c) as compared to (a-b))
   }
   \label{fig:3Dplots_TV}
     \vspace{-.2cm}
  \end{figure*}

\subsection{Budget Splitting}

Here we address the problem of splitting $k$ optimally between the two phases, that is, determining an optimal $k_1$ and hence $k_2$.
Note that when $k_2$ is not fixed, the objective function $\mathbb{F}(S_1,d,k_2)$ is no longer monotone with respect to the first phase seed set $S_1$.
For instance, $\mathbb{F}(\{\},d,k-|\{\}|) = \mathbb{F}(\{\},d,k) = \sigma(S^O)$ where $S^O$ is the optimal seed set for single phase,
while for any $|S^\#|=k$, $\mathbb{F}(S^\#,d,k-|S^\#|) = \mathbb{F}(S^\#,d,0) = \sigma(S^\#)$. Unless $S^\#$ is an optimal seed set for single phase, we will have $ \sigma(S^O)> \sigma(S^\#)$ and hence $\mathbb{F}(\{\},d,k-|\{\}|)>\mathbb{F}(S^\#,d,k-|S^\#|)$, even though $\{\} \subset S^\#$.


Using FACE, we can implicitly optimize over $k_1$ and $S_1$ (such that $|S_1|=k_1$) simultaneously by allowing each data sample to consist of a value of $k_1$ sampled from $\{1,\ldots,k\}$, as well as a sampled set $S_1$ of size $k_1$. 
\begin{remark}
For faster convergence, in the first iteration, instead of choosing each node $i$ in the set with probability $\frac{k_1}{n}$, we choose it with probability $q_i = \frac{k_1 w_i}{\sum_i w_i}$, where $w_i$ is as in Equation~(\ref{eqn:gdd_value}).
In cases wherein the value of $q_i$ exceeds 1, we distribute the surplus to other nodes with values less than 1, in proportion of their current values. We repeat this until all nodes have values at most 1. This process of distributing the surplus value is to ensure that 
the expected size of the sampled set does not drop below $k_1$.
%
The rest of the iterations follow as per the standard FACE algorithm. 
\end{remark}
%
In RMax method, for every sample, $k_1$ is chosen u.a.r. (uniformly at random) from $\{1,\ldots,k\}$ and hence a set $S_1$ of cardinality $k_1$ is sampled. The output set is one that maximizes the objective function among the sampled sets.
As there is no implicit way to optimize over $k_1$ in rest of the algorithms,
we do the following: as we add nodes one by one to construct the set $S_1$, we keep track of the maximum value attained so far, to determine a value maximizing set $S_1$ of size $k_1 \leq k$. 
%

Figure~\ref{fig:plots_mpid}(a) presents the results of different budget splits for the considered algorithms on LM dataset (results are for WC model, results for TV model were qualitatively similar).
We also studied various budget splits for NetHEPT dataset using both WC and TV models, the results of which are provided in
Figures~\ref{fig:plots_mpid}(c) and \ref{fig:3Dplots_TV}(c)
(see $d=10$; we have limited $d$ to $10$ for the purpose of clarity; the observations for $d>10$ were almost same as that for $d=10$)
and also Figure~\ref{fig:plot_all_deltas_hep_tv} ($\delta=1.00$).
The results for different values of $k$ are provided in
Table~\ref{tab:impwithk}.
These results show that our na\"ive first guess of splitting the budget equally was a good one, even though other splits give marginally higher values
(considering $d=D$).

\begin{observation}
For the datasets considered, under all settings (different diffusion models and values of $k$), a split of $k_1:k_2 \approx \frac{1}{3}:\frac{2}{3}$ is observed to be optimal. 
\end{observation}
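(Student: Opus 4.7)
The plan is to establish this observation empirically through a systematic sweep of $k_1 \in \{1, 2, \ldots, k-1\}$ for every $\langle$dataset, model, $k\rangle$ triple considered in the paper, with $d$ fixed at its optimum $d=D$ (per Theorem \ref{thm:D_is_opt}). A purely analytical derivation of the $1/3$ ratio appears out of reach because the two-phase objective $f(\cdot)$ is neither submodular nor supermodular (Property \ref{prop:nonsubmodular}) and, once $k_2$ is allowed to vary with $|S_1|$, $\mathbb{F}$ also loses monotonicity in $S_1$; so the only credible route is a careful empirical demonstration that the maximiser of $\mathbb{F}(S_1^*, D, k-k_1)$ over $k_1$ sits near $k/3$ across all regimes.

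First, for each value of $k_1$, I would invoke a farsighted algorithm with the surrogate $h(\cdot)$ to produce a near-optimal first-phase seed set $S_1^*$ of cardinality $k_1$, and then evaluate the resulting two-phase expected spread by the nested Monte-Carlo scheme described in Section \ref{sec:simulations} (using $\mathcal{M}_1$ and $\mathcal{M}_2$ replicates for the two phases). On the smaller LM dataset I would use FACE/SPIC where affordable and cross-validate with GDD; on NetHEPT (with Hep-Th as a consistency check) I would use GDD as the inner solver, leaning on Observation \ref{obs:gdd_approx_greedy}. The resulting spread-versus-$k_1$ curves, generated for both WC and TV models and for $k \in \{50,100,200,300\}$, should be laid out as in Figures \ref{fig:plots_mpid}(a,c) and \ref{fig:3Dplots_TV}(c) and summarised as in Table \ref{tab:impwithk}, after which I would verify that in every row the maximising $k_1^*$ satisfies $k_1^*/k \approx 1/3$. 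An informal intuition should accompany the evidence: first-phase seeds play a dual role -- they spread influence and simultaneously reveal the live graph via the observed $\mathcal{A}^Y, \mathcal{R}^Y$ -- while second-phase seeds can only exploit what has been revealed; too small a $k_1$ leaves the process under-observed, too large a $k_1$ squanders the exploitation potential of phase two, and the empirical ratio $1/3$ reflects where the marginal informational value of an additional phase-one seed first falls below the marginal exploitation value of a phase-two seed.

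The hard part will be computational cost and statistical reliability: each evaluation of $\mathbb{F}(S_1, D, k-k_1)$ already carries the nested $\mathcal{M}_1 \mathcal{M}_2$ Monte-Carlo overhead, the inner optimisation over $S_1$ multiplies this by the per-algorithm factor (e.g., $O(n\mathcal{T}\mathcal{I})$ for FACE), and the outer sweep multiplies again by $k$, which for $k=300$ on NetHEPT is barely feasible even with GDD as the inner solver. Moreover, because FACE and SPIC are randomised and the optimum in $k_1$ tends to be shallow, several replicates per setting will be required to rule out noise before claiming that $1/3$ is universally optimal rather than merely one favoured value; this is the principal reason for falling back on GDD for the larger datasets and for reporting ratios only at a sensible granularity.
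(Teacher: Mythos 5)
Your proposal is correct and follows essentially the same route as the paper: the observation is supported empirically by sweeping $k_1$ (with $d=D$) across the LM and NetHEPT datasets under both WC and TV models and several values of $k$, reporting the optimizing $k_1$ in Table~\ref{tab:impwithk} and Figures~\ref{fig:plots_mpid} and \ref{fig:3Dplots_TV}, and accompanying the evidence with the same exploration--exploitation intuition (including the explanation that the first few seeds are highly influential, which skews the optimum below $k/2$).
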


A possible reason for $k_1 \approx k_2$ being a good guess is
a trade-off between (a) the size of the observed diffusion and (b) the exploitation based on the observed diffusion. If the value of $k_1$ is too low, not many nodes may be influenced and so we may not be able to observe the diffusion to a considerable extent, leaving us with little information for selecting the seed nodes for the second phase. On the other hand, if the value of $k_2$ is too low, we may not be able to select enough number of seed nodes for the second phase to exploit the information obtained from the observed diffusion.
The optimal split $k_1:k_2 \approx \frac{1}{3}:\frac{2}{3}$ (a skew towards lower values of $k_1$) can perhaps be attributed to the fact that the first set of seed nodes selected by most algorithms, are very influential, and it is not necessary to allocate half of the budget to first phase in order to obtain a large enough observable diffusion.

  \begin{figure}
  \centering
  \includegraphics[scale=0.42]{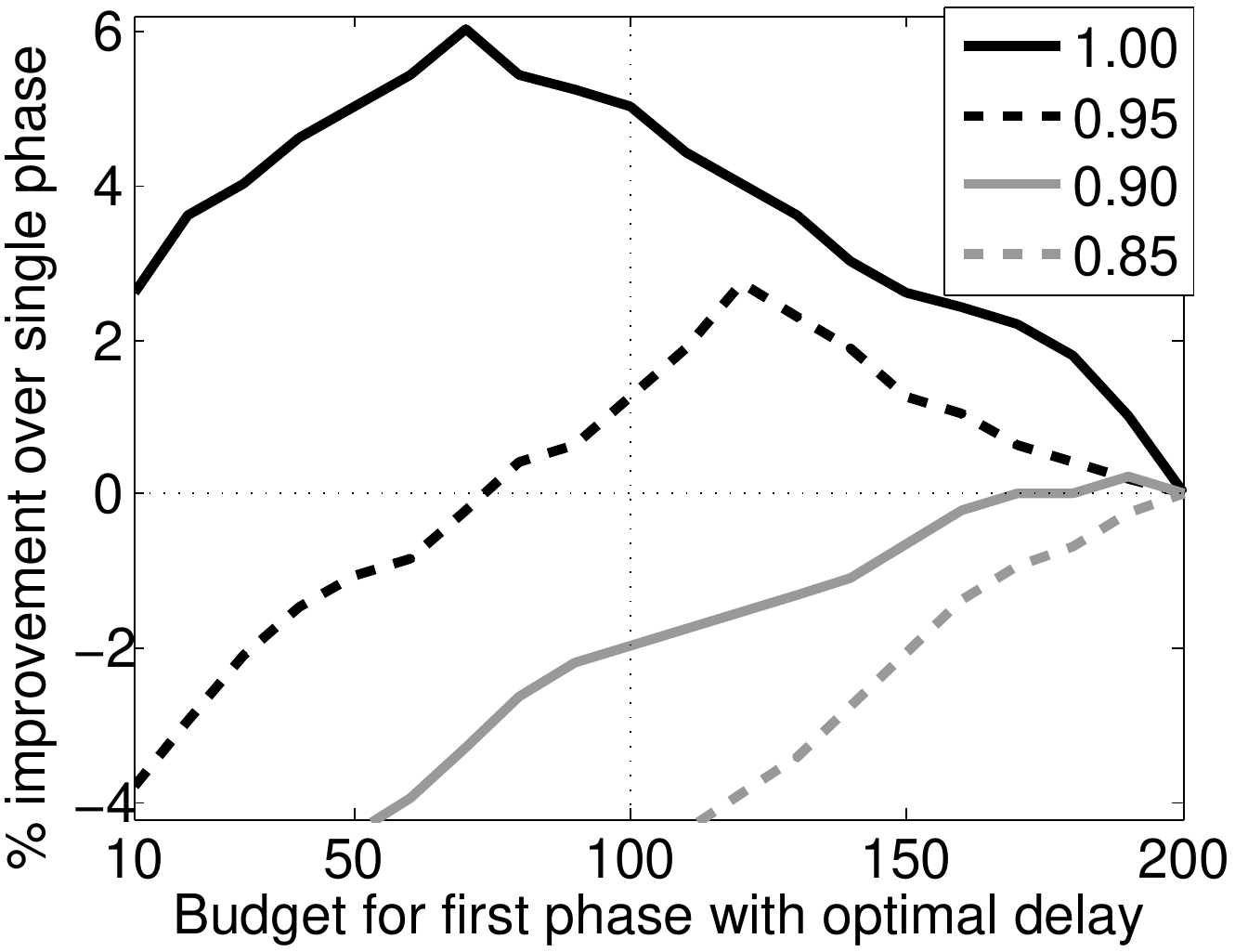}
  \caption{Typical results of splitting ${k=200}$ with optimal $d$ $(\geq 1)$ for different ${\delta}$'s ($d=D$ for $\delta=1.00$, $d=1$ for other $\delta$'s) on NetHEPT  (TV model) (2D views of plots in Figure~\ref{fig:3Dplots_TV} corresponding to the optimal $<$$k_1,d$$>$ pairs)}
  \label{fig:plot_all_deltas_hep_tv}
  \vspace{-2mm}
  \end{figure}
  
  \subsection{Scheduling the Second Phase}
  \label{sec:temporal}
  
  It is clear that a two-phase diffusion would result in a higher influence spread than the single-phase one. This brings us to address the following questions: (a) why not use two-phase diffusion all the time? and (b) why not wait for the first phase to complete its diffusion process before starting the second phase? It is to be noted that the standard IC model fails to capture the effects of time taken for the diffusion process. 
  A more realistic objective function would capture not only the influence spread, but also its rate.
  One such objective function could be 
  $
  \nu(S) = \sum_{t=0}^\infty \Gamma(t) \sigma_{(t)}(S)
  $,
  where $\Gamma(\cdot)$ is a non-increasing function such that $\Gamma(t) \leq 1$ for all values of $t$, and $\sigma_{(t)}(S)$ is the expected number of newly activated nodes at time step $t$.
  
  Alternatively, let $t_j^{X,S}$ be the minimum number of time steps in which node $j$ can be reached from set $S$ in live graph $X$.
  Then $\Gamma(t_j^{X,S})$ is the value obtained for influencing node $j$ in live graph $X$, and $\sum_X p(X) \Gamma(t_j^{X,S})$ is the expected value obtained for influencing node $j$.
  So the expected influence value obtained starting from a seed set $S$ is $\nu(S) = \sum_j  \sum_X p(X) \Gamma(t_j^{X,S})$.
  Note that if $\Gamma(t) = 1$ for all $t$, then $\nu(\cdot)$ reduces to $\sigma(\cdot)$.
  Thus we modify our two-phase objective function 
  by incorporating $\Gamma(t)$.
  %
  %
  \begin{theorem}
  $\nu(\cdot)$  is non-negative, monotone increasing, and submodular, for any non-increasing function $\Gamma(\cdot)$ where $0 \leq \Gamma(t) \leq 1, \; \forall t$. 
  \end{theorem}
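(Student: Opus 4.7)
The plan is to verify the three properties one at a time, reducing each to a pointwise statement about the function $\Gamma(t_j^{X,S})$ for a fixed node $j$ and a fixed live graph $X$. Since $\nu(S)$ is a non-negative linear combination (over $j$ and $X$, with weights $p(X) \geq 0$) of such terms, any of the three properties will carry over from the pointwise level to $\nu$ itself.

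Non-negativity is immediate from the hypothesis $0 \leq \Gamma(t) \leq 1$. For monotonicity I would argue that $t_j^{X,S}$, being the length of a shortest directed path from $S$ to $j$ in $X$, satisfies $t_j^{X,T} \leq t_j^{X,S}$ whenever $S \subseteq T$, because any path from $S$ is also a path from $T$. Since $\Gamma$ is non-increasing, this gives $\Gamma(t_j^{X,T}) \geq \Gamma(t_j^{X,S})$, and summing preserves the inequality, yielding monotonicity of $\nu$.

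The main work is submodularity. Here the key structural fact I would use is that adjoining a single node $i$ to any seed set $S$ changes the distance from $S$ to $j$ via the formula
\begin{equation*}
t_j^{X, S \cup \{i\}} = \min\bigl(t_j^{X,S},\, t_j^{X,\{i\}}\bigr),
\end{equation*}
since a shortest path from $S\cup\{i\}$ to $j$ either starts from $S$ or from $i$. Writing $a = t_j^{X,S}$, $b = t_j^{X,T}$, $c = t_j^{X,\{i\}}$ with $S \subset T$ (so $b \leq a$), the marginal increments reduce to
\begin{equation*}
\Gamma(\min(a,c)) - \Gamma(a) \quad \text{versus} \quad \Gamma(\min(b,c)) - \Gamma(b).
\end{equation*}
I would then do a short three-case analysis according to the relative ordering of $c$ with respect to $a$ and $b$ (namely $c \geq a$, $b \leq c \leq a$, and $c \leq b$), using only the non-increasing property of $\Gamma$ to check that the $S$-marginal dominates the $T$-marginal in every case. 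This establishes submodularity of $\Gamma(t_j^{X,\cdot})$ for each fixed $j,X$, and linearity of the sum/expectation lifts it to $\nu$.

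The main obstacle I anticipate is the case analysis for submodularity: one must be careful that the argument never uses convexity or concavity of $\Gamma$, only its monotonicity, because the theorem claims submodularity for \emph{any} non-increasing $\Gamma$ with values in $[0,1]$. The min-decomposition above is exactly what allows the argument to go through with monotonicity alone, so the proof essentially reduces to recognizing that the set-to-distance map $S \mapsto t_j^{X,S}$ behaves as a min of singleton distances and that post-composition by a non-increasing function preserves submodularity in this specific min-structured setting.
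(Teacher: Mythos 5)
Your proposal is correct and follows essentially the same route as the paper: decompose $\nu$ into the pointwise terms $\psi_j^X(S)=\Gamma(t_j^{X,S})$, prove non-negativity, monotonicity, and submodularity of each such term using only the non-increasing property of $\Gamma$ and the fact that $t_j^{X,S\cup\{i\}}=\min\bigl(t_j^{X,S},\,t_j^{X,\{i\}}\bigr)$, and lift the properties to $\nu$ by non-negative linear combination. Your explicit three-case analysis on the ordering of $c$ relative to $a$ and $b$ is just a finer carving of the paper's two cases (whether adding $i$ to $T$ does or does not shorten the distance to $j$), so there is no substantive difference.
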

  \begin{proof}
  The non-negativity of $\nu(\cdot)$ is direct from the non-negativity of $\sigma_{(t)}(\cdot)$.
  Now, it is clear that $t_j^{X,S} \geq t_j^{X,T}$ for any $S \subset T$, and owing to $\Gamma(\cdot)$ being a non-increasing function, we have $\Gamma(t_j^{X,S}) \leq \Gamma(t_j^{X,T})$.
  Since this is true for any live graph $X$, we have $\sum_X p(X) \Gamma(t_j^{X,S}) \leq \sum_X p(X) \Gamma(t_j^{X,T})$. Also, since this is true for any node $j$, we have $\sum_j  \sum_X p(X) \Gamma(t_j^{X,S}) \leq \sum_j  \sum_X p(X) \Gamma(t_j^{X,T})$ or equivalently, $\nu(S) \leq \nu(T)$. 
  This proves the monotone increasing property of $\nu(\cdot)$.
  
  For the purpose of proving its submodularity, let us define another function $\psi_j^X(S) = \Gamma(t_j^{X,S})$. So $\nu(S) = \sum_j  \sum_X p(X) \psi_j^X(S)$, that is, $\nu(\cdot)$ is a non-negative linear combination of the functions $\psi_j^X(\cdot)$.
  Consider arbitrary sets $S$ and $T$ and an arbitrary node $i$ such that $S \subset T$ and $i \in N\setminus T$.
  We first prove the submodularity of $\psi_j^X(\cdot)$ for an arbitrary node $j$ and a live graph $X$ using two possible cases. 
  In the first case, if addition of $i$ to the set $T$ does not reduce the number of time steps required to reach node $j$, then $\psi_j^X({T\cup\{i\}}) = \psi_j^X(T)$.
  In the second case,  if addition of $i$ to the set $T$ reduces the number of time steps required to reach the node, then $\psi_j^X({T\cup\{i\}}) = \psi_j^X({\{i\}}) = \psi_j^X({S\cup\{i\}})$.
  In both the cases, $\psi_j^X({S\cup\{i\}}) - \psi_j^X(S) \geq \psi_j^X({T\cup\{i\}}) - \psi_j^X(T)$. 
  This proves the submodularity of $\psi_j^X(\cdot)$ and hence of their non-negative linear combination $\nu(\cdot)$.
  \end{proof}
  Thus following argument similar to that in Section~\ref{sec:compute}, the two-phase objective function (taking time into consideration) can be well approximated using greedy algorithm 
  for seed selection in the second phase; and GDD heuristic can be used as an effective proxy for greedy.
  %
  Note that GDD heuristic would perform very well for the temporal objective function $\nu(\cdot)$ because it maximizes the number of nodes influenced in the immediately following time step. It would be an excellent algorithm when $\Gamma(1)$ is significantly larger than $\Gamma(t)$ for $t \geq 2$.
  In our simulations, we consider $\Gamma(t) = \delta^t$ where $\delta \in [0,1]$ (this is generally the first guess for a decay function in several problems).

\begin{table}[t]
\caption{Performance of  FACE with implicit optimization versus that with exhaustive search}
\centering
%
\begin{tabular}{c||c||c|c|c||c|c|c}
\hline \hline
\T \B
\multirow{3}{*}{$\delta$} & Single & \multicolumn{6}{c}{Two-phase with optimal $<$$k_1,d$$>$} 
\\
\cline{3-8}
& \multicolumn{1}{c||}{phase} &  \multicolumn{3}{c||}{\T \B Implicit opt.} & \multicolumn{3}{c}{\T \B Exhaustive}
\\ \cline{3-8}
\T \B
& value & $k_1$ & $d$ & value & $k_1$ & $d$ & value 
\\ \hline
\T \B
0.75 & 33.2	&	6	&	0	& 33.2	& 6 & 0  	& 33.2	
\\ \T \B
0.80 & 36.0 &	6	&	0	& 36.0	& 5 & 1  	& 36.7	
\\ \T \B
0.85 & 38.0	& 	5	&	2	&  39.1 & 5 & 1  	& 39.7	
\\ \T \B
0.90 & 40.6 & 	5	&	2	&  41.8	& 4 & 1 	& 42.4	
\\ \T \B
0.95 & 42.9	& 	4	&	1	&  46.1	& 4 & 2 	& 46.5 	
\\ \T \B
1.00 & 46.2	& 	2	&	$D$	&  51.4 & 2 & $D$ 	& 51.4	
\\ \hline \hline
\end{tabular}
%
\label{tab:face_temporal}
\vspace{-2mm}
\end{table}

Now our objective is to not only find an optimal $k_1$, but also an optimal delay $d$.
%
%
We have seen that FACE algorithm implicitly computes influential seed nodes while simultaneously optimizing over $k_1$. Now in addition to a sampled value of $k_1$ and a sampled set of cardinality $k_1$, we allow each data sample to also contain a value of $d$, sampled from $\{1,\ldots,D\}$. Table~\ref{tab:face_temporal} shows that the differences between (a) the spread achieved using this implicit optimization method and (b) that achieved using exhaustive search over $k_1$ and $d$, for different $\delta$'s on LM dataset, are low. 
The time taken for implicit optimization was observed to be approximately $\frac{1}{k D}$ of that taken for exhaustive search.
This shows the effectiveness of FACE algorithm for getting the best out of two-phase diffusion by addressing the combined optimization problem.

As mentioned earlier, for NetHEPT dataset also,
we observed that for $\delta=1$, it is optimal to allocate one-third of the budget to first phase and delay $d=D$. 
For $\delta \leq 0.85$, it is optimal to use single-phase diffusion. 
For intermediate values of $\delta$, it is optimal to allocate most of the budget to the first phase with a delay of one time step; the necessity of allocating most of the budget to the first phase increases as $\delta$ decreases. Figures~\ref{fig:3Dplots_TV}(a-c) and \ref{fig:plot_all_deltas_hep_tv} show this in an elaborate way.

Figure~\ref{fig:plots_mpid}(b) shows how the expected spread progresses with time for different $<$$k_1,d$$>$ pairs on NetHEPT dataset under WC model, given $k=200$.
$<$$200,0$$>$ corresponds to single phase diffusion, $<$$10,1$$>$ corresponds to two-phase diffusion with a random $<$$k_1,d$$>$ pair, $<$$100,14$$>$ corresponds to equal budget split $k_1=k_2=\frac{k}{2}$ with $d=D$, and $<$$70,14$$>$ corresponds to optimal $<$$k_1,d$$>$ pair.
(We have $D=14$ in the plots as the first phase diffusion stagnated after 14 time steps for $k_1=70$ and $100$.) 
These types of plots showing the progression of diffusion with time may help a company to decide the ideal values of $k_1$ and $d$ based on its desired progression.

\subsection{An Efficient Method for the Combined Optimization Problem}
\label{sec:gss}

We have seen that the performance of FACE algorithm is excellent, however, it is computationally intensive and hence impractical for large networks. With this in view, we propose another algorithm that is based on empirical observations in Figures~\ref{fig:plots_mpid}(c) and \ref{fig:3Dplots_TV}(a-c).

We note that the plots are unimodal in nature for the considered representative algorithms and datasets with respect to either $k_1$ (with a good enough interval between consecutive $k_1$'s) or $d$ as variable. We could exploit this nature for maximizing the objective function by using the {\em  golden section search} technique with $k_1$ as the variable, where the objective function itself is computed with an optimal $d$ for that particular $k_1$ (which can be found using golden section search). In the special case of the considered exponential decay function, since the optimal values of $d$ would be very small for almost any $\delta<1$, we find an optimal $d$ for a particular $k_1$ using sequential search starting from $d=0$.
Note that as long as the function does not change its value drastically within small intervals (which would be true for the considered problem), the golden section search technique will give an optimal or near-optimal solution even when the function is not perfectly unimodal, but unimodal when the interval between consecutive $k_1$'s is good enough.

We also explored whether the plots are unimodal with respect to $k_1$ and $d$ simultaneously, so that faster methods such as multidimensional direct search, can be used. 
However, though the plots are observed to be unimodal with respect to $k_1$ and $d$ individually, they are not unimodal with respect to them simultaneously.

\section{Discussion}
\label{sec:conclusion_mpid}
We proposed and motivated two-phase diffusion in social networks, formulated an appropriate objective function, proposed an alternative objective function, developed suitable algorithms for seed selection, and observed their performances using simulations. 
We observed that myopic algorithms perform closely to their farsighted counterparts, while taking orders of magnitude less time.
%
For the combined optimization problem of budget splitting, scheduling, and seed selection, 
we proposed the usage of FACE and golden section search algorithms.
We concluded that: (a) under strict temporal constraints, use single-phase diffusion, (b) under moderate temporal constraints, use two-phase diffusion with a short delay while allocating most of the budget to first phase, (c) in absence of temporal constraints, use two-phase diffusion with a long enough delay with a budget split of $\frac{1}{3}:\frac{2}{3}$ between the two phases (one-third budget for the first phase).
We presented results for a few representative settings; these results are very general in nature.



{\scriptsize{$\bullet$}} 
{\em A Note on the Decay Function\/}:
We considered a very strict decay function (exponential), which resulted in humbling two-phase diffusion for most range of $\delta$. In practice, the decay function would be more lenient, where the value would remain high for first few time steps and then decay at a slow rate. Such a decay function would be more suitable for two-phase diffusion. 
%
Note, however, that our choice of a simple exponential decay function allowed us to draw firm conclusions, which would not have been the case with a sophisticated function. 
One could also account for time by studying the problem in presence of competing diffusions, where a delay may help competitors reach the potential customers first.

{\scriptsize{$\bullet$}} 
{\em Extending to the Linear Threshold (LT) Model\/}:
In this paper, we discussed multi-phase diffusion using IC model, primarily because it is a natural setting for such a diffusion. 
One can as well study multi-phase diffusion using the other most popular model, the LT model \cite{kempe2003maximizing}.
%
In LT model, an influence degree $b_{u,v}$ is associated with every directed edge $(v,u)$
and an influence threshold $\chi_u$ (uniformly distributed in $[0,1]$) is associated with every node $u$. 
The diffusion 
proceeds in discrete time steps
and
a node $u$ is activated when
%
$
\sum_{v\in \mathcal{A}} b_{u,v} \geq \chi_u
$,
where $\mathcal{A}$ is the set of activated nodes.
%
The diffusion stops when no further nodes can be activated.
%
At the beginning of first phase, the thresholds are assumed to be uniformly distributed in $[0,1]$. When the second phase is scheduled to start, we have the information regarding whether a node is active or not.
%
In addition, we have the updated information regarding any inactive node $u$ 
that its threshold is 
greater than $\sum_{v\in \mathcal{A}} b_{u,v}$. So while determining the seed set for second phase, we can exploit this information by assuming its threshold to be uniformly distributed in $\left(\sum_{v\in \mathcal{A}} b_{u,v},1\right]$, instead of a wider (and more uncertain) range of $[0,1]$.

{\scriptsize{$\bullet$}} 
{\em Future work\/}:
This work can be extended to study diffusion 
in more than two phases,
 with respect to the influence spread and the time taken. 
%
%
 We focused on the well-studied IC model and provided a note regarding the LT model; studying multi-phase diffusion under other diffusion models is another direction to look at.
 %
 %
It would be useful to study how multi-phase diffusion can be harnessed to get a desired expected spread with a reduced budget.
%
%
It would be of theoretical interest to prove or disprove if there exists an algorithm that gives a constant factor approximation for the problem of two-phase influence maximization.
%
%
It would also be interesting to study equilibria in a game theoretic setting where multiple campaigns consider the possibility of multi-phase diffusion.

\section*{Acknowledgments}
The original publication appears in IEEE Transactions on Network Science and Engineering, volume 3, number 4, pages 197-210 and is available at http://ieeexplore.ieee.org/abstract/document/7570252/.
A previous, very preliminary, concise version of this paper is published in Proceedings of The 14th International Conference on Autonomous Agents and Multiagent Systems, 2015 \cite{dhamal2015multiphase}.
This work has been partially supported by Adobe Research Labs, Bangalore. At the time when this work was done, Swapnil Dhamal was a recipient of IBM Ph.D. Fellowship and Prabuchandran K. J. was a recipient of TCS Doctoral Fellowship.
The authors thank Surabhi Akotiya, Rohith D. Vallam, and Debmalya Mandal for useful discussions.
 
%
%

\bibliographystyle{IEEEtran}
\bibliography{MPID_TNSE_references}



\end{document}